\theoremstyle{plain}
\newtheorem{thm}{Theorem}[section]
\newtheorem{prop}[thm]{Proposition}
\newtheorem{fact}[thm]{Fact}
\theoremstyle{definition}
\newtheorem{dfn}[thm]{Definition}
\newtheorem{conv}[thm]{Convention}
\newtheorem{exmp}[thm]{Example}
\newtheorem{rem}[thm]{Remark}
\newtheorem{dfns-rems}[thm]{Definitions and Remarks}
\newtheorem{notas-rems}[thm]{Notations and Remarks}
\newtheorem{exmps-rems}[thm]{Examples and Remarks}
\newtheorem*{clm*}{Claim}
\newlength\mylen
\newcommand\RELArrow{
\to\kern-0.55\mylen\vline height 1.1ex depth -0.4pt\kern0.55\mylen}
\newcommand{\rel}{\RELArrow}
\title{Uniform Interpolation for Coalgebraic Fixpoint Logic }
\author{Johannes Marti\thanks{johannes.marti@gmail.com}}
\author{Fatemeh Seifan\thanks{F.Seifan@uva.nl}}
\author{Yde Venema\thanks{Y.Venema@uva.nl}}
\affil{ILLC, Universiteit van Amsterdam, The Netherlands}
\begin{document}
\maketitle
\begin{abstract}
We use the connection between automata and logic to prove that a wide class of coalgebraic fixpoint logics enjoys uniform interpolation. To this aim, first we generalize one of the central results in coalgebraic automata theory, namely closure under projection, which is known to hold for weak-pullback preserving functors, to a more general class of functors, i.e.; functors with quasi-functorial lax extensions. Then we will show that closure under projection implies definability of the bisimulation quantifier in the language of coalgebraic fixpoint logic, and finally we prove the uniform interpolation theorem. 

\end{abstract}

\section{Introduction}
The connection between automata and logic goes back to the early seventies by the works of B{\"u}chi \cite{jb} and Elgot \cite{ce}, who showed that finite automata and monadic second-order logic have the same expressive power over finite words, and that the transformations from formulas to automata and vice versa are effective. This connection has found important applications and landmark results, such as Rabins's decidability theorem \cite{mr}. During the last twenty years study of the link between automata and logic has been continued and many interesting results have been obtained, such as results in \cite{dj:iw}, where Janin and Walukiewicz established the connection between the modal $\mu$-calculus and parity automata operating on labeled transition systems. 

The coalgebraic perspective on the link between automata and logic has been uniformly studied in \cite{yv}, where the author introduces the notion of a coalgebra automaton and establishes the connection between these automata and coalgebraic fixpoint logic based on Moss' modality ($\nabla$)\cite{lm}. Coalgebraic fixpoint logic is a  powerful extension of coalgebraic modal logic \cite{lm} with fixpoint operators. The main contribution of this paper will be to add \emph{uniform interpolation} to the list of properties of coalgebraic fixpoint logic.

A logic has \emph{interpolation} if, whenever we have two formulas $a$ and $b$ such that $\models a\rightarrow b$ (meaning that the formula $a\rightarrow b$ holds in every state of every model), then there is an \emph{interpolant} formula $c$ in the \emph{common language} of $a$ and $b$ (i.e.; $c$ may use only propositional letters that appear both in $a$ and $b$), such that $\models a\rightarrow c$ and $\models c\rightarrow b$. This notion is familiar from first-order logic, and is known there as Craig interpolation \cite{wc} .
Some logics enjoy a much stronger version of interpolation, namely uniform interpolation, which has been introduced by Pitts in \cite{ap}. A logic has uniform interpolation if the interpolant $c$ does not really depend on $b$ itself, but only on the language $b$ shares with $a$. However it is easy to show that classical propositional logic has uniform interpolation, not many logics have this property, for instance first-order logic has interpolation, but it does not have the uniform interpolation \cite{lh}.

In order to provide some more motivation for studying uniform interpolation, let us mention some recent works on this property. Starting with the seminal work of Pitts \cite{ap} who introduced this version of interpolation and proved that intuitionistic logic has uniform interpolation\cite{ap},  the study of this property for different logics has been actively pursued by various authors. In modal logic, Shavrukov \cite{vs} proved that the G{\"o}del-L{\"o}b logic \textbf{GL} has uniform interpolation. Subsequently, Ghilardi \cite{sg} and Visser \cite{av} independently established the property for modal logic \textbf{K}, while \cite{sg:mz} contains negative results for modal logic \textbf{S4}.  In the theory of modal fixtpoint logic D'Agostino and Hollenberg proved that the modal $\mu$-calculus has uniform interpolation \cite{dg:mh}. In the same paper they showed that the logical property of uniform interpolation corresponds to the automata theoretic property of \emph{closure under projection}, which is one of the main results in linking automata and logic.

In this paper we confine our attention to the set functors $\mathsf{T}:\mathsf{Set}\rightarrow\mathsf{Set}$ which have specific kinds of \emph{relation lifting}. A relation lifting $L$ for a functor $\mathsf{T}$ maps every relation $R:X\rel Y$ between the sets $X$ and $Y$ to a relation $LR:\mathsf{T}X\rel\mathsf{T}Y$ between the sets $\mathsf{T}X$ and $\mathsf{T}Y$. The notion of relation lifting has been used in the theory of coalgebras and coalgebraic modal logic to define a notion of bisimilarity (which we denote it by $\leftrightarroweq^{L}$) between states in $\mathsf{T}$-coalgebras, and to define a semantics for the Moss' modality. 

The coalgebraic fixpoint logic $\mu\mathcal{L}_{L}^{\mathsf{T}}(\mathsf{P})$ for functor $\mathsf{T}$ and relation lifting $L$ over set $\mathsf{P}$ of propositional letters, is the extension of coalgebraic modal logic $\mathcal{L}_{L}^{\mathsf{T}}(\mathsf{P})$ with the least fixpoint operator and given by the following grammer:

$$a::= p\mid \neg a\mid \bigvee A\mid\nabla{\alpha}\mid\mu p.a,$$
where $p\in\mathsf{P}$, $A\in\mathcal{P}_{\omega}(\mu\mathcal{L}_{L}^{\mathsf{T}})$ and $\alpha\in \mathsf{T}_\omega (\mu\mathcal{L}_{L}^{\mathsf{T}}(\mathsf{P}))$ (see Definition~\ref{syntax}).
As we have already mentioned, the purpose of this paper is to show that $\mu\mathcal{L}_{L}^{\mathsf{T}}(\mathsf{P})$ enjoys the uniform interpolation. To this aim we will roughly follow the proof by D'Agostino and Hollenberg in \cite{dg:mh} which is based on the definability of a certain nonstandard second-order quantifier in $\mu\mathcal{L}_{L}^{\mathsf{T}}$. This \emph{bisimulation quantifier} is given by the following semantics:
\begin{eqnarray}\label{sem}
\mathbb{S},s\Vdash\exists{p}.b\quad\textmd{iff}\quad\mathbb{S'},s'\Vdash b,~\textmd{for some}~\mathbb{S'},s'~\textmd{with}~\mathbb{S},s\leftrightarroweq_{p}^{L}\mathbb{S'},s',
\end{eqnarray}
where $\leftrightarroweq_{p}^{L}$ denotes the relation of bisimilarity up to the proposition letter $p$. Intuitively (\ref{sem}) says that we can make formula $b$ true by changing the interpretation of $p$, however not necessarily here, but in an up-to-$p$ bisimilar state. For more details on bisimulation quantifiers in modal logic see \cite{tf}.  In the case of coalgebraic modal logic it has been semantically proved in \cite{jm} that if we restrict our attention to the functors $\mathsf{T}$ with a quasi-functorial lax extensions $L$ (special relation liftings, see Def.~\ref{dfn1} for a precise definition), which is a weaker condition than preservation of weak pullbacks, then the bisimulation quantifier is definable and $\mathcal{L}_{L}^{\mathsf{T}}$ (Moss' coalgebraic logic) has uniform interpolation. In addition to  \cite{jm}, in \cite{dp} the author introduced a version of coalgebraic modal logic: the logic of exact covers which enjoys the uniform interpolation. Although the logic of exact covers can be seen as non-monotonic version of Moss' coalgebraic logic that overcomes the requirement of weak pullback preservation, it precludes fixpoint extensions in the style of \cite{yv}. So in this paper we restrict  to the same class of functors as\cite{jm} and we  will extend the result  to the coalgebraic fixpoint logic by proving definability of the bisimulation quantifier via the closure of coalgebra automata under projection. Consequently, the results we will present in this paper can be devided in two types:

\begin{itemize}
\item[(i)] We study purely automata theoretic questions ( Section 4)
\item [(ii)] We apply automata theoretic results to solve questions in logic~(Section 5)
\end{itemize}

Of course, there is an interplay between these two types of results in the following way. We first prove the main technical result of this paper, i.e, the closure of coalgebra automata under projection. This property allows us to show the definabilty of bisimulation quantifiers and finally prove the main result of this paper, the uniform interpolation theorem \ref{uniform}.

\textbf{Overview} We first fix notation and terminology on \textsf{Set}-based functors and coalgebras; we also introduce relation liftings and bisimulations and equip the reader with the necessary background material. In section $3$ we introduce coalgebraic fixpoint logic and give a breif introduction to coalgebra automata theory. After that, we prove in section $4$ our main technical result. We show that if functor $\mathsf{T}:\mathsf{Set}\rightarrow\mathsf{Set}$ has a quasi-functorial lax extension $L$ which preserves diagonals, then $\mathsf{T}$-automata are closed under projection. Finally in section $5$ we combine the results from section $3$ and section $4$ in order to prove uniform interpolation for coalgebraic fixpoint logic $\mu\mathcal{L}_{L}^{\mathsf{T}}$. We finish the paper in section $6$ with an outlook on future results. 

\section{Preliminaries}
This paper presupposes knowledge of the theory of coalgebras \cite{jr}. In this section we recall some of the central definitions to fix the notation. We assume familiarity with basic notions from category theory such as categories, functors, natural transformations and equivalent categories.
\subsection{Set Functors}
We will work in the category \textsf{Set}, that has sets as objects and functions as arrows. It is assumed that the reader is familiar with the usual constructions on sets, so the following explanations are there to fix notation. The notion $f: X\rightarrow Y$ means that $f$ is a function with domain $X$ and codomain $Y$. The identity function for a set $X$ is denoted by $id_X: X\rightarrow X$. The composition of two functions $f: X\rightarrow Y$ and $g: Y\rightarrow Z$ is the usual composition of functions written as $g\circ f: X\rightarrow Z$. For sets $X'\subseteq X$, the inclusion map from $X'$ to $X$ is denoted by $i_{X',X}: X'\hookrightarrow X$, $x\mapsto x$. For a function $f: X\rightarrow Y$ we define the set $Rng(f)=\{y\in Y\mid \exists x\in X, f(x)=y\}\subseteq Y$.
In the following we assume, if not explicitly stated otherwise, that functors are covariant endofunctors in the category \textsf{Set}. 

We first introduce some of the functors that concern us in this paper. The \emph{powerset functor} is the functor $\mathcal{P}:\mathsf{Set}\rightarrow \mathsf{Set}$, which maps a set $S$ to the set of all its subsets $\mathcal{P}(S)=\{V\mid V\subseteq S\}$. A function $f:S\rightarrow T$ is mapped to $\mathcal{P}(f):\mathcal{P}(S)\rightarrow\mathcal{P}(T)$, which is defined for any $V\subseteq S$ by $\mathcal{P}(f)(V)=f[V]=\{f(v)\mid v\in V \}$. The \emph{contravariant powerset} functor $\breve{\mathcal{P}}$ also maps a set $S$ to $\breve{\mathcal{P}}S=\mathcal{P}S$. On functions $\breve{\mathcal{P}}$ is the inverse image map, that is for an $f:S\rightarrow T$ we have $\breve{\mathcal{P}}(f):\breve{\mathcal{P}}(S)\rightarrow\breve{\mathcal{P}}(T)$, $V\mapsto f^{-1}[V]$.
The \emph{neighborhood} functor $\mathcal{N}=\breve{\mathcal{P}}\breve{\mathcal{P}}$ is the double contravariant powerset functor.
Given a set $S$ and an element $\alpha\in \mathcal{N}S$, we define
$$\alpha^{\uparrow}:=\{X\in\mathcal{P}S\mid Y\subseteq ~X ~\text{for}~\text{some}~ Y\in\alpha\},$$
and we say that $\alpha$ is \emph{upward closed} if $\alpha=\alpha^{\uparrow}$.
The \emph{monotone neighborhood} functor $\mathcal{M}$ is the restriction of neighborhood functor to upward closed sets. More concretely the functor $\mathcal{M}$ is given by $\mathcal{M}(S):=\{\beta\in \mathcal{N}(S)\mid \beta~\text{is upward closed}\}$, while for $f:S\rightarrow T$, we define $\mathcal{M}f : \mathcal{M}S\rightarrow \mathcal{M}T$ by $(\mathcal{M}f)(\beta) := ((\mathcal{N}f)(\beta))^{\uparrow}$.
\begin{dfn}
Functor $\mathsf{T}: \textsf{Set}\rightarrow \textsf{Set}$ is \emph{inclusion preserving} if
$\mathsf{T}i_{A,B}=i_{\mathsf{T}A,\mathsf{T}B}$ for all sets $A\subseteq B$.
\end{dfn}
\begin{prop}\label{prop2} Let $\mathsf{T}:Set\rightarrow Set$ be an inclusion preserving functor, then $\mathsf{T}(Rng (f))=Rng\mathsf{T}(f)$ for any function $f$ in $\mathsf{Set}$.
\end{prop}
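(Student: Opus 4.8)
The plan is to use the (surjection, inclusion)-factorization of $f$ in $\mathsf{Set}$ together with the fact that $\mathsf{T}$ preserves both surjections and --- by hypothesis --- inclusions. Write $f : X \to Y$, so that $Rng(f) \subseteq Y$, and factor $f$ as $f = i_{Rng(f),Y}\circ e$, where $e : X \to Rng(f)$ is the corestriction $e(x) = f(x)$; by construction $e$ is surjective. Applying $\mathsf{T}$ and using that $\mathsf{T}$ is inclusion preserving, I obtain
\[
\mathsf{T}f \;=\; \mathsf{T}i_{Rng(f),Y}\circ \mathsf{T}e \;=\; i_{\mathsf{T}Rng(f),\mathsf{T}Y}\circ \mathsf{T}e ,
\]
so $\mathsf{T}e : \mathsf{T}X \to \mathsf{T}Rng(f)$ and, using the elementary identity $Rng(g\circ h) = g[Rng(h)]$, $Rng(\mathsf{T}f) = i_{\mathsf{T}Rng(f),\mathsf{T}Y}\big[Rng(\mathsf{T}e)\big]$.

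The heart of the argument is then to show that $\mathsf{T}e$ is again surjective, i.e. $Rng(\mathsf{T}e) = \mathsf{T}Rng(f)$; this is the general fact that every endofunctor on $\mathsf{Set}$ preserves surjections. If $Rng(f) \neq \emptyset$, the surjection $e$ splits: by the axiom of choice there is $s : Rng(f) \to X$ with $e\circ s = id_{Rng(f)}$, hence $\mathsf{T}e \circ \mathsf{T}s = \mathsf{T}(e\circ s) = id_{\mathsf{T}Rng(f)}$, so $\mathsf{T}e$ is a split epimorphism in $\mathsf{Set}$ and therefore surjective. If $Rng(f) = \emptyset$, then $X = \emptyset$ as well, $e = id_\emptyset$, and $\mathsf{T}e = id_{\mathsf{T}\emptyset}$ is trivially surjective.

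Combining the two observations, $Rng(\mathsf{T}f) = i_{\mathsf{T}Rng(f),\mathsf{T}Y}\big[\mathsf{T}Rng(f)\big] = \mathsf{T}Rng(f)$, since an inclusion map applied to all of its domain returns that domain as a subset of its codomain. The only genuinely delicate point is the appeal to choice (with the empty-set case treated separately) needed to see that $\mathsf{T}$ carries the surjection $e$ to a surjection; the remainder is bookkeeping with the factorization and the inclusion-preservation hypothesis.
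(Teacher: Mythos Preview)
Your proof is correct and follows essentially the same approach as the paper: factor $f$ as an inclusion composed with a surjection, use the inclusion-preservation hypothesis on the first factor, and use that set functors preserve surjections on the second. The only difference is that you spell out the argument for preservation of surjections (splitting via choice, with the empty case handled separately), whereas the paper simply invokes this as a known fact.
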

\begin{proof}
Let $f : A\rightarrow B$ be a set function and $f|_{Rng(f)}:A\rightarrow Rng(f) $ denote the restriction of $f$ on its range. So $f=i\circ f|_{Rng(f)}$, where $i:Rng(f)\hookrightarrow B$ is the inclusion map. Hence $\mathsf{T}(f) = \mathsf{T}(i \circ f|_{Rng (f)}) =\mathsf{T}{i}\circ \mathsf{T}(f|_{Rng (f)} )$ and $Rng(\mathsf{T}f) = Rng (\mathsf{T}{i}\circ \mathsf{T}(f|_{Rng(f)}))$. But the inclusion preserving property of $\mathsf{T}$ implies that $\mathsf{T}i$ is the inclusion map from $\mathsf{T}(Rng(f))$ to $\mathsf{T}B$, and from this fact it follows that $Rng (\mathsf{T}{i}\circ \mathsf{T}(f|_{Rng (f)}))=Rng(\mathsf{T}(f|_{Rng(f)}))$. But since $f|_{Rng(f)}$ is a surjective map and set functors preserve the surjectiveness of functions, we have $Rng(\mathsf{T}(f|_{Rng(f)}))=\mathsf{T}(Rng (f))$. Hence $\mathsf{T}(Rng (f))=Rng\mathsf{T}(f)$.
\end{proof}
In the context of coalgebraic logic one pays special attention to functors that preserve finite sets and are finitary.
\begin{dfn}
A functor $\mathsf{T}$ \emph{preserves finite sets} if $\mathsf{T}X$ is finite whenever $X$ is. An inclusion preserving $\mathsf{T}$ is called \emph{finitary} if it satisfies for all sets $X$
$$\mathsf{T} X = \bigcup\{\mathsf{T}X'\subseteq\mathsf{T}X\mid X'\subseteq X, ~X' ~\text{is finite}\}.$$
\end{dfn}
The definition can be simply generalized to the class of all set functors as follows:
A set functor $\mathsf{T}$ is finitary if it satisfies for all sets $X$
$$\mathsf{T} X = \bigcup\{\mathsf{T}i_{X', X}(\mathsf{T}X')\subseteq\mathsf{T}X\mid X'\subseteq X, ~X' ~\text{is finite}\}.$$
For every inclusion preserving set functor $\mathsf{T}$ one can define its finitary version $\mathsf{T}_{\omega}$ such that it maps a set $X$ to
$$\mathsf{T}_{\omega}X=\bigcup\{\mathsf{T}X'\mid X'\subseteq X, X' \text{is finite}\}.$$
Similarly, the finitary version of an arbitrary set functor $\mathsf{T}$ can be defined such that it maps a set $X$ to
$$\mathsf{T}_\omega X = \bigcup\{\mathsf{T}i_{X', X}(\mathsf{T}X')\subseteq\mathsf{T}X\mid X'\subseteq X, ~X' ~\text{is finite}\}.$$\\
A function $f:X\rightarrow Y$ is mapped by $\mathsf{T}_\omega$ to the function
\begin{eqnarray*}
\mathsf{T}_{\omega}f:\mathsf{T}_{\omega}X &\rightarrow &\mathsf{T}_{\omega}X, \\
\xi &\mapsto & \mathsf{T}i_{f[X'],Y}\circ \mathsf{T}f_{X'}(\xi')
\end{eqnarray*}
where $\xi '\in \mathsf{T}X'$ is such that $\xi=i_{X', X}(\xi ')$ for a finite $X'\subseteq X$ and $f_{X'}$ is the function $f_{X'}:X'\rightarrow f[X']$, $x'\mapsto f(x).$
An example of a finitary version of a functor that we will use is $\mathcal{P}_\omega$, that maps a set $X$ to the set of all its finite subsets.
An other important class of set functors in the context of coalgebraic modal logic is the class of intersection preserving functors.
\begin{dfn} A set functor $\mathsf{T}$ \emph{preserves finite intersections} if for all sets $A$ and $B$, $\mathsf{T}(A\cap B)=\mathsf{T}A\cap\mathsf{T}B.$
\end{dfn}
In ~\cite[Proposition 2.1]{vt} it has been shown that every set functor preserves non-empty finite intersections, which means if $A\cap B\not=\emptyset$ then we have $\mathsf{T}(A\cap B)=\mathsf{T}A\cap\mathsf{T}B$ for free. But the proof doesn't work for empty intersections and there are some functors, like the monotone neighborhood functor, which do not preserve the empty intersection. However it may not be the case that a given set functor $\mathsf{T}$ preserves all (empty and non-empty) finite intersections, one can redefine $\mathsf{T}$ on the empty set and on the empty maps to obtain a functor $\mathsf{T}'$ which preserves all finite intersections:
\begin{prop}\label{redef} Every set functor $\mathsf{T}$ preserves non-empty finite intersections. By redefining $\mathsf{T}$ on the empty set $\emptyset$ and on the empty maps $\emptyset_A:\emptyset\rightarrow A$, it can be made to preserve all (empty and non-empty) finite intersections.
\end{prop}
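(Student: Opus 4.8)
The plan is to prove the two claims in turn; throughout, the only input beyond elementary set theory is that a functor sends every map with a left inverse again to one with a left inverse, and hence sends every injection with non-empty domain to an injection. We understand ``$\mathsf{T}$ preserves the intersection $A\cap B$'' as ``$\mathsf{T}$ sends the pullback square built from the inclusions $A\cap B\hookrightarrow A$, $A\cap B\hookrightarrow B$, $A\hookrightarrow A\cup B$ and $B\hookrightarrow A\cup B$ to a pullback square''; for an inclusion preserving $\mathsf{T}$ this is exactly the set equation $\mathsf{T}(A\cap B)=\mathsf{T}A\cap\mathsf{T}B$ computed inside $\mathsf{T}(A\cup B)$.

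For the first claim, assume $A\cap B\neq\emptyset$, set $C:=A\cup B$, and let $k_A\colon A\cap B\hookrightarrow A$, $k_B\colon A\cap B\hookrightarrow B$, $j_A\colon A\hookrightarrow C$, $j_B\colon B\hookrightarrow C$ be the inclusions, so $m:=j_A\circ k_A=j_B\circ k_B$. As $A\cap B\neq\emptyset$, every set here is non-empty, so $\mathsf{T}$ sends all these inclusions to injections; the square it produces commutes since $\mathsf{T}j_A\circ\mathsf{T}k_A=\mathsf{T}m=\mathsf{T}j_B\circ\mathsf{T}k_B$. For the universal property, take $u\in\mathsf{T}A$, $v\in\mathsf{T}B$ with $t:=\mathsf{T}j_A(u)=\mathsf{T}j_B(v)$, fix $d_0\in A\cap B$, and let $s\colon C\to B$ be the retraction given by $s(x)=x$ for $x\in B$ and $s(x)=d_0$ otherwise. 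Then $s\circ j_B=id_B$, while $s\circ j_A$ has image inside $A\cap B$, hence equals $k_B\circ\rho$ for a suitable $\rho\colon A\to A\cap B$. Put $w:=\mathsf{T}\rho(u)\in\mathsf{T}(A\cap B)$. Then $\mathsf{T}k_B(w)=\mathsf{T}(s\circ j_A)(u)=\mathsf{T}s(t)=\mathsf{T}(s\circ j_B)(v)=v$, and applying the injection $\mathsf{T}j_A$ to $\mathsf{T}k_A(w)$ gives $\mathsf{T}j_A(\mathsf{T}k_A(w))=\mathsf{T}m(w)=\mathsf{T}j_B(\mathsf{T}k_B(w))=t=\mathsf{T}j_A(u)$, so $\mathsf{T}k_A(w)=u$; uniqueness of $w$ is immediate from injectivity of $\mathsf{T}k_B$. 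Hence the square is a pullback.

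For the second claim, fix a one-element set $1$ and a two-element set $2$, with the two injections $\iota_0,\iota_1\colon 1\to 2$. Let $\mathsf{T}'$ agree with $\mathsf{T}$ on all non-empty sets and on all maps between non-empty sets, set $\mathsf{T}'id_\emptyset:=id_{\mathsf{T}'\emptyset}$, and define
$$\mathsf{T}'\emptyset:=\{x\in\mathsf{T}1\mid\mathsf{T}\iota_0(x)=\mathsf{T}\iota_1(x)\}.$$
For non-empty $A$, pick $a\in A$ and let $\mathsf{T}'\emptyset_A$ be the restriction of $\mathsf{T}(\mathrm{const}_a)\colon\mathsf{T}1\to\mathsf{T}A$ to $\mathsf{T}'\emptyset$; this is independent of $a$, since any two maps $1\to A$ have the form $h\circ\iota_0$, $h\circ\iota_1$ for one $h\colon 2\to A$, so they induce the same map on $\mathsf{T}'\emptyset$. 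Functoriality of $\mathsf{T}'$ only needs checking for composites $\emptyset\to A\to B$, where it follows from $g\circ\mathrm{const}_a=\mathrm{const}_{g(a)}$ for $g\colon A\to B$; and since $\mathsf{T}'$ differs from $\mathsf{T}$ only at $\emptyset$, the first claim (which involves no empty sets) holds for $\mathsf{T}'$ as well.

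It remains to show $\mathsf{T}'$ preserves empty intersections. If $A$ or $B$ is empty the relevant square has an identity as one leg, so its image under $\mathsf{T}'$ is trivially a pullback. So suppose $A,B\neq\emptyset$ and $A\cap B=\emptyset$, and put $C:=A\cup B$ (the disjoint union), with inclusions $j_A,j_B$. Then $\mathsf{T}j_A,\mathsf{T}j_B$ are injective, the pullback of $\mathsf{T}j_A$ and $\mathsf{T}j_B$ in $\mathsf{Set}$ is $Rng(\mathsf{T}j_A)\cap Rng(\mathsf{T}j_B)$, and $Rng(\mathsf{T}'\emptyset_C)$ is contained in it by functoriality, since $\emptyset\to C$ factors through both $A$ and $B$. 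For the reverse inclusion, given $z=\mathsf{T}j_A(u)=\mathsf{T}j_B(v)$, transport $u$ to $x\in\mathsf{T}1$ along the unique map $A\to 1$; a diagram chase using the unique map $C\to 1$, a map $C\to 2$ separating $A$ from $B$, and the two retractions $C\to A$ and $C\to B$, shows that $x\in\mathsf{T}'\emptyset$ and $\mathsf{T}'\emptyset_C(x)=z$. As $\mathsf{T}'\emptyset_C$ is injective (being a restriction of $\mathsf{T}$ of an injective map $1\to C$), the comparison map from $\mathsf{T}'\emptyset$ to this pullback is a bijection compatible with the projections, which is what we want. I expect this last step to be the main obstacle: one must check that the single redefined value $\mathsf{T}'\emptyset$ is simultaneously small enough to be every honest empty-intersection pullback and large enough to carry all the maps $\mathsf{T}'\emptyset_A$ coherently. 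The non-empty case and the functoriality checks are routine given the split-mono observation. (All of this is essentially Trnková's theorem; cf.~\cite{vt}.)
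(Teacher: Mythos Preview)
Your argument is correct and is essentially Trnkov\'a's construction, which is exactly what the paper invokes; the paper simply cites \cite{vt} and \cite{ja:vt} for the non-empty case and describes the modified functor abstractly, whereas you spell both out. The one presentational difference is that the paper sets $\mathsf{T}'\emptyset:=\mathrm{Nat}(C_{0,1},\mathsf{T})$ with $\mathsf{T}'\emptyset_A(\nu):=\nu_A(\ast)$, while you take $\mathsf{T}'\emptyset$ to be the equalizer $\{x\in\mathsf{T}1\mid \mathsf{T}\iota_0(x)=\mathsf{T}\iota_1(x)\}$; these are canonically isomorphic, since a natural transformation $\nu\colon C_{0,1}\Rightarrow\mathsf{T}$ is determined by $\nu_1(\ast)\in\mathsf{T}1$, and naturality at the two maps $1\rightrightarrows 2$ is precisely your equalizer condition, after which naturality at all other maps follows automatically. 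Your concrete description has the advantage that the verification you sketch in the last paragraph (the diagram chase with $!_C$, the separating map $C\to 2$, and the retractions $C\to A$, $C\to B$) can be carried out by hand, and indeed it goes through: one checks $x:=\mathsf{T}!_A(u)=\mathsf{T}!_C(z)=\mathsf{T}!_B(v)$, then the separating map gives $\mathsf{T}\iota_0(x)=\mathsf{T}\iota_1(x)$, and the retraction $r_A\colon C\to A$ gives $u=\mathsf{T}r_A(z)=\mathsf{T}(\mathrm{const}_{a_0})(x)$, hence $z=\mathsf{T}'\emptyset_C(x)$.
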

\begin{proof}The elementary proofs for the fact that $\mathsf{T}$ preserves non-empty intersections, as we mentioned above, can be found in \cite{vt} or \cite{ja:vt}. In order to modify $\mathsf{T}$ on the empty set and on the empty mappings, Trnkov\'{a} considers first the functor $C_{0,1}$, which maps the empty set to itself and every non-empty set to the one-element set $\{*\}$.
Let $\mathsf{T}'$ agree with $\mathsf{T}$ everywhere, except on the empty set and on the empty mappings. $\mathsf{T}'(\emptyset)$ is defined to be the set of all natural transformations $\nu:C_{0,1}\rightarrow \mathsf{T}$. For each empty map $\emptyset_X:\emptyset\rightarrow X$, whenever $X\not=\emptyset$, define $\mathsf{T}'\emptyset_{X}$ by $\mathsf{T}'\emptyset_{X}(\nu):=\nu_X(\{*\})$ for each $\nu$. Then $\mathsf{T}'$ preserves all finite intersections and $\mathsf{T}'\emptyset_{X}$ is injective for each set $X$. This, together with the fact that all set functors preserves injections with non-empty domain~\cite{mb}, implies that $\mathsf{T}'$ preserves all injections.
\end{proof}
\subsection{Coalgebras}
In the following part of this section, we will briefly recall the basic notions from the theory of coalgebras that we will use later. For a detailed introduction into coalgebras see for example \cite{jr}.
\begin{dfn}Given a set functor $\mathsf{T}$, a $\mathsf{T}$-\emph{coalgebra} is a pair $\mathbb{S}=(S,\sigma)$ with $\sigma: S\rightarrow\mathsf{T}S$.
A pointed $\mathsf{T}$-coalgebra is a pair consisting of a $\mathsf{T}$-coalgebra together with an element of (the carrier set of) that coalgebra. A $\mathsf{T}$-\emph{coalgebra} \emph{morphism} from $\mathsf{T}$-coalgebra $\mathbb{S}=(S,\sigma)$ to $\mathbb{S'}=(S',\sigma')$, written $f:\mathbb{S}\rightarrow\mathbb{S'}$, is a function $f:S\rightarrow S'$ such that $\mathsf{T}(f)\circ\sigma=\sigma'\circ f$.
\end{dfn}
It is easy to check that the collection of $\mathsf{T}$-coalgebra morphisms contains all identity arrows and is closed under arrow composition. So, the $\mathsf{T}$-coalgebras with their morphisms form a category denoted by $Coalg(\mathsf{T})$.
\begin{dfn}
Let $\mathsf{T}$ be an endofunctor on the category $\textsf{Set}$, and $C$ an arbitrary set of objects that we shall call \emph{colors}. We let $\mathsf{T}_C$ denote the functor $\mathsf{T}_C S=\mathsf{T}S\times C$; that is, $\mathsf{T}_C$ maps a set $S$ to the set $\mathsf{T}S\times C$ (and a function $f:S\rightarrow S' $ to the function $\mathsf{T}f\times id_C:\mathsf{T}S\times C\rightarrow \mathsf{T}S'\times C)$. $\mathsf{T}_C$-coalgebras will also be called $C$-\emph{colored} $\mathsf{T}$-\emph{coalgebras}. We will usually denote $\mathsf{T}_C$-coalgebras as triples $\mathbb{S}=(S, \sigma, \gamma)$, with $\sigma: S\rightarrow\mathsf{T}S$ the coalgebra map and $\gamma:S\rightarrow C$ the \emph{coloring (marking)}.
\begin{conv}\label{int} Since the modification of a functor $\mathsf{T}$ to functor $\mathsf{T'}$, given by Trnkov\'{a} in the proof of Proposition~\ref{redef} is not going to change the $\mathsf{T}$-coalgebras i.e., the category $Coalg(\mathsf{T})$ is equivalent to the category $Coalg(\mathsf{T'})$, we will from now on assume that the functor $\mathsf{T}$ preserves all finite intersections and inclusions.
\end{conv}
\end{dfn}
\subsection{Relation Lifting and Bisimulation}
In the remaining part of this section we introduce the notion of relation lifting to define a very general notion of bisimulation for coalgebras. First we recall some central definitions and fix mathematical notation and terminology.
Given sets $X$ and $Y$, we denote a relation $R$ between $X$ and $Y$ by $R:X\rel Y$ to specify its domain $X$ and codomain $Y$. We write $R;S:X\rel Z$ for the composition of two relations $R:X\rel Y$ and $S:Y\rel Z$ and $R^{\circ}:Y\rel X$ for the converse of $R:X\rel Y$ with $(y,x)\in R^{\circ}$ iff $(x,y)\in R$. The graph of any function $f:X\rightarrow Y$ is a relation $f:X\rel Y$ between $X$ and $Y$ for which we also use the symbol $f$. It will be clear from the contex in which a symbol $f$ occurs whether it is meant as a function or a relation. Note that the composition of functions is denoted the other way round the composition of relations, so we have $g\circ f=f;g$ for functions $f:X\rightarrow Y$ and $g:Y\rightarrow Z$.
For a relation $R:X\rel Y$ we define the sets $$Dom(R)=\{x\in X \mid \exists y\in Y , (x,y)\in R\}\subseteq X,$$ $$Rng(R)=\{y\in Y\mid \exists x\in X, (x,y)\in R\}\subseteq Y.$$
The relation $R:X\rel Y$ is \emph{full} on $X$ if $Dom(R)=X$ and is full on $Y$ if $Rng(R)=Y$. Given sets $X'\subseteq X$ and $Y'\subseteq Y$, we define the restriction $R\mid_{X'\times Y' }:X'\rel Y'$ of the relation $R:X\rel Y$ as $R\mid_{X'\times Y' } =R\cap (X′\times Y′)$.
For any set $X$ let $\in_{X} : X\rightarrow \mathcal{P}X$ be the \emph{membership relation} between elements of $X$ and subsets of $X$.
For a relation $R:X\rel Y$ we define the relation $\hat{R}:X\times Y\rel Y$ by $(x,y)\in R$ iff $((x,y),y)\in\hat{R}$. From the definition it is obvious that $\hat{R}$ is functional.
Given a set $X$ we define the \emph{diagonal relation} $\Delta_{X}:X\rel X$ with $(x,x')\in \Delta_{X}$ iff $x=x'$. Note that $\Delta_{X}=id_{X}$, where $id_{X}$ is the graph of the identity function $id_{X}$.
\begin{dfn}
A relation lifting $L$ for a set functor $\mathsf{T}$ is a collection of relations $LR$ for every relation $R$, such that $ LR : \mathsf{T}X\rel\mathsf{T}Y$ if
$R:X\rel Y$ . We require relation liftings to preserve converse, this means that $L(R^\circ) = (LR)^{\circ}$ for all relations $R$.
\end{dfn}

\begin{exmp} (i) The \emph{Egli-Milner lifting} $\overline{\mathcal{P}}$ is a relation lifting for covariant power set functor $\mathcal{P}$ that is defined for any $R:X\rel Y$ such that $\overline{\mathcal{P}}R=\overrightarrow{\mathcal{P}}R\cap \overleftarrow{\mathcal{P}}R$, where:
$$\overrightarrow{\mathcal{P}}R:=\{(U,V)\in\mathcal{P}X\times\mathcal{P}Y\mid\forall u\in U ~ \exists v\in V~\text{s.t.}~ (u,v)\in R\},$$
$$\overleftarrow{\mathcal{P}}R:=\{(U,V)\in\mathcal{P}X\times\mathcal{P}Y\mid\forall v\in V ~ \exists u\in U~\text{s.t.}~ (u,v)\in R\}.$$

(ii) For the constant functor $D$ of a fixed set $D$ define a relation lifting $\overline{D}$ for any $R:X\rel Y$ such that $\overline{D}R=\Delta_D$.

(iii) Recall the notion of $\overrightarrow{\mathcal{P}}R$ from (i) we can define a relation lifting $\widetilde{\mathcal{M}}$ for the monotone neighborhood functor $\mathcal{M}$ on a relation $R:X\rel Y$ as follows:
$$\widetilde{\mathcal{M}}R:=\overrightarrow{\mathcal{P}}\overleftarrow{\mathcal{P}}R\cap \overleftarrow{\mathcal{P}}\overrightarrow{\mathcal{P}}R.$$
\end{exmp}

An important use of relation liftings is to yield a notion of bisimulation.
\begin{dfn}
Let $L$ be a relation lifting for the functor $\mathsf{T}$ and $\mathbb{S}=(S,\sigma)$ and $\mathbb{S'}=(S',\sigma')$ be two $\mathsf{T}$-coalgebras. An $L$-\emph{bisimulation} between $\mathbb{S}$ and $\mathbb{S'}$ is a relation $R:S\rel{S'}$ such that $(\sigma(s),\sigma'(s'))\in LR$, for all $(s,s')\in R$. A state $s\in\mathbb{S}$ is $L$-bisimilar to a state $s'\in\mathbb{S'}$ if there is an $L$-bisimulation $R$ between $\mathbb{S}$ and $\mathbb{S'}$ with $(s,s')\in R$. We write $\leftrightarroweq^{L}$ for the notion of $L$-bisimulation between two fixed coalgebras.
Given two $C$-colored $\mathsf{T}$-coalgebras $\mathbb{S}=(S, \sigma, \gamma)$ and $\mathbb{S'}=(S', \sigma', \gamma')$ and a relation lifting $L$ for the functor $\mathsf{T}$, a relation $R:S\rel{S'}$ is called an $L_{C}$-\emph{bisimulation} between $\mathbb{S}$ and $\mathbb{S'}$, whenever $(\sigma(s),\sigma'(s'))\in LR$ and $\gamma(s)=\gamma '(s')$ for all $(s,s')\in R$.
\end{dfn}

Now we will give the definition of lax extensions, which are relation liftings satisfying certain conditions that make them well-behaved in the context of coalgebra.
\begin{dfn}\label{dfn1}
A relation lifting $L$ for a functor $\mathsf{T}$ is called a \emph{lax extension} of $\mathsf{T}$ if it satisfies the following conditions, for all relations $R, R':X\rel Z$ and $S:Z\rel Y$ and all functions $f:X\rightarrow Z$
\begin{itemize}
\item[(L1)] $R'\subseteq R$ implies $LR'\subseteq LR,$
\item[(L2)] $LR;LS\subseteq L(R;S),$
\item[(L3)] $\mathsf{T}f\subseteq Lf.$
\end{itemize}
We say that a lax extension $L$ \emph{preserves diagonals} if it additionally satisfies:
\begin{itemize}
\item[(L4)]$L\Delta_X\subseteq {\Delta}_{\mathsf{T}X}.$
\end{itemize}
\end{dfn}
We call a lax extension $L$ of $\mathsf{T}$ \emph{functorial}, if it distributes over composition, i.e., for all relations $R:X \rel Z$ and $S:Z \rel Y$, $L(R);L(S)= L(R;S)$. A lax extension $L$ is called \emph{quasi-functorial}, if it satisfies the following condition for all relations $R:X \rel Z$ and $S:Z \rel Y$:
$$L(R);L(S)= L(R;S)\cap(Dom(LR)\times Rng(LS)).$$

\begin{exmp} The relation lifting $\widetilde{\mathcal{M}}$ for the monotone neighborhood functor is quasi-functorial. It is easy to check that $\widetilde{\mathcal{M}}$ is a lax extension that preserves diagonals. So in the following we will just give the proof for the quasi-functoriality of $\widetilde{\mathcal{M}}$:

Take any two relations $R:X\rel Z$ and $S:Z\rel Y$. We need to show that for all $(\alpha, \beta)\in \widetilde{\mathcal{M}}$, if there are $\gamma_R$ and $\gamma_S$ in $\mathcal{M}Z$, with $(\alpha,\gamma_R)\in \widetilde{\mathcal{M}}R$ and $(\gamma_S,\beta)\in \widetilde{\mathcal{M}}S$, then there is a $\gamma\in\mathcal{M}Z$ such that $(\alpha, \gamma)\in \widetilde{\mathcal{M}}R$ and $(\gamma, \beta)\in \widetilde{\mathcal{M}}S$.

From the assumption that  $(\alpha,\gamma_R)\in \widetilde{\mathcal{M}}R\subseteq \overrightarrow{\mathcal{P}}\overleftarrow{\mathcal{P}}R$, we get that: 
$$\forall A\in\alpha, \exists U_a\in\gamma_R~\text{s.t.}~ (A,U_a)\in\overleftarrow{\mathcal{P}}R.$$
Similarly we get the followings:

\vspace{4 mm}

$(\alpha,\beta)\in\widetilde{\mathcal{M}}(R;S)\subseteq\overrightarrow{\mathcal{P}}\overleftarrow{\mathcal{P}}(R;S)$ imply that:
$$\forall A\in\alpha, \exists V_A\in\beta~\text{s.t.}~ (A,V_A)\in\overleftarrow{\mathcal{P}}(R;S).$$

$(\gamma_S,\beta)\in\widetilde{\mathcal{M}}S\subseteq\overleftarrow{\mathcal{P}}\overrightarrow{\mathcal{P}}S$ and $(\alpha,\beta)\in\widetilde{\mathcal{M}}(R;S)\subseteq\overleftarrow{\mathcal{P}}\overrightarrow{\mathcal{P}}(R;S)$ implies that:
$$\forall B\in\beta, \exists U_B\in\gamma_S~\text{and}~\exists A_B\in\alpha ~\text{s.t.}~ (U_B,B)\in\overrightarrow{\mathcal{P}}S~\text{and}~ (A_B,B)\in\overrightarrow{\mathcal{P}}(R;S) .$$
From $(A,V_A)\in\overrightarrow{\mathcal{P}}(R;S)$ it follows that:
$$\forall v\in V_A, \exists a_v\in A ~\text{s.t.}~ (a_v,v)\in (R;S),$$
so there is a $z_v\in Z$ such that $(a_v,z_v)\in R$ and $(z_v,v)\in S$.

Now we define for every $A\in\alpha$:
$$U'_A=U_A\cup\{z_v\in Z\mid v\in V_A\}.$$
We claim that $(A,U'_A)\in \overleftarrow{\mathcal{P}}R$. To prove this, take $u\in U'_A$, we need to show that there exists $t\in A$ such that $(t,u)\in R$. Since $u\in U'_A$, we have two cases:
\begin{itemize}
\item[(i)] $u\in U_A$, then from $(A, U_A)\in \overleftarrow{\mathcal{P}}R$ we are done.
\item[(ii)] $u\in\{z_v\in Z\mid v\in V_A\}$. In this case from the definition of $z_v$ we have that there exists $a_v\in A$ such that $(a_v,z_v)\in R$.
\end{itemize}
On the other hand because $\forall v\in V_A, (z_v, v)\in S$, we have that $(U'_A, V_A)\in \overleftarrow{\mathcal{P}}S$.
 
We can similarly define for every $B\in\beta$ a set $U'_B$ such that:
$$(U'_B,B)\in \overrightarrow{\mathcal{P}}S~\text{and}~ (A_B, U'_B)\in\overrightarrow{\mathcal{P}}R.$$

Now we are ready to introduce $\gamma\in \mathcal{M}Z$:

$$\gamma=\{U\subseteq Z\mid \exists A\in \alpha ~\text{with}~ U'_A\subseteq U ~\text{or}~\exists B\in \beta ~\text{with}~ U'_B\subseteq U\}$$

It is clear that $\gamma$ is upward closed, so $\gamma\in \mathcal{M}Z$.
It is left to show that $(\alpha,\gamma)\in \widetilde{\mathcal{M}}R$ and $(\gamma,\beta)\in \widetilde{\mathcal{M}}S$. We have that
$(\alpha,\gamma)\in \widetilde{\mathcal{M}}R$ iff $(\alpha,\gamma)\in\overrightarrow{\mathcal{P}}\overleftarrow{\mathcal{P}}R $ and $(\alpha,\gamma)\in\overleftarrow{\mathcal{P}}\overrightarrow{\mathcal{P}}R$. 
For the proof of $(\alpha,\gamma)\in\overrightarrow{\mathcal{P}}\overleftarrow{\mathcal{P}}R $ note that for every $A\in \alpha$ we have that $(A, U'_A)\in\overleftarrow{\mathcal{P}}R$.
For $(\alpha,\gamma)\in\overleftarrow{\mathcal{P}}\overrightarrow{\mathcal{P}}R$, pick $U\in\gamma$. Then from the definition of $\gamma$ it follows that $U'_A\subseteq U$ for  $A\in\alpha$ or $U'_B\subseteq U$ for $B\in\beta$.
In the first case consider that $U_A\subseteq U'_A\subseteq U$ and by the assumption $(\alpha,\gamma_R)\in\widetilde{\mathcal{M}}R\subseteq\overleftarrow{\mathcal{P}}\overrightarrow{\mathcal{P}}R$ we get that there exists $T\in \alpha$ such that $(T, U_A)\in \overrightarrow{\mathcal{P}}R$, so $(T, U)\in\overrightarrow{\mathcal{P}}R$.
For the case that there exists $B\in \beta$ such that $U'_B\subseteq U$, we have that $(A_B; U'_B)\in\overrightarrow{\mathcal{P}}R $. 
\end{exmp}

\begin{prop}\label{diag}
Let $\mathsf{T}$ be a set functor and let $L$ be a quasi-functorial lax extension for $\mathsf{T}$. Then we have:
\begin{itemize}
\item[(1)] L preserves fullness of relations :\\
If $R: X\rel Z$ is full on both sides, then so is $LR:\mathsf{T}X \rel\mathsf{T}Z$;
\item[(2)] If $R:X\rel Z$ is full on X and $i:Z\hookrightarrow Z'$ is the inclusion map between $Z$ and $Z'$ then $L(R;i)$ is full on $\mathsf{T}X$;
\item[(3)] If $L$ preserves diagonals then for any function $f$, $\mathsf{T}f= Lf$.
\end{itemize}
\end{prop}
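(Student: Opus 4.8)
The plan is to prove the three parts in order, using the axioms (L1)--(L4) of a lax extension together with the quasi-functoriality identity $L(R);L(S) = L(R;S)\cap(\mathrm{Dom}(LR)\times\mathrm{Rng}(LS))$. The guiding idea throughout is to factor a relation $R:X\rel Z$ through its graph-like pieces and then exploit that composing with a function on one side behaves almost functorially.

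\medskip
\textbf{Part (1).} Suppose $R:X\rel Z$ is full on both sides, i.e.\ $\mathrm{Dom}(R)=X$ and $\mathrm{Rng}(R)=Z$. I want to show $\mathrm{Dom}(LR)=\mathsf{T}X$ and $\mathrm{Rng}(LR)=\mathsf{T}Z$; by the converse-preservation requirement $L(R^\circ)=(LR)^\circ$ it suffices to prove fullness on one side, say $\mathrm{Dom}(LR)=\mathsf{T}X$. The key observation is that $R$ full on both sides means $R;R^\circ \supseteq \Delta_X$ and $R^\circ;R\supseteq \Delta_Z$. Indeed, for $x\in X$ pick $z$ with $(x,z)\in R$; then $(x,x)\in R;R^\circ$. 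Now apply (L1) and (L3): $\Delta_X = id_X \subseteq R;R^\circ$ gives, via $\mathsf{T}(id_X)\subseteq L(id_X)\subseteq L(R;R^\circ)$, that $\Delta_{\mathsf{T}X}\subseteq L(R;R^\circ)$ — here I use $\mathsf{T}(id_X)=id_{\mathsf{T}X}$, which holds since $\mathsf{T}$ is a functor. By quasi-functoriality, $L(R;R^\circ) = L(R);L(R^\circ) \cup \bigl(L(R;R^\circ)\setminus(\mathrm{Dom}(LR)\times\mathrm{Rng}(LR^\circ))\bigr)$; more precisely, $L(R);L(R^\circ)=L(R;R^\circ)\cap(\mathrm{Dom}(LR)\times\mathrm{Rng}(L(R^\circ)))$. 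Since $(t,t)\in L(R;R^\circ)$ for every $t\in\mathsf{T}X$, to conclude $t\in\mathrm{Dom}(LR)$ I still need $t\in\mathrm{Dom}(LR)$ from the diagonal being inside — so I should instead argue directly: $\Delta_{\mathsf{T}X}\subseteq L(R;R^\circ)$, and I want to intersect with the Dom/Rng box. The cleanest route is: $\Delta_{\mathsf{T}X}\subseteq L(R);L(R^\circ)$ would finish it, because then each $t$ has some $s$ with $(t,s)\in L(R)$. To get this, note $L(R);L(R^\circ)$ and $L(R;R^\circ)$ agree \emph{on the box} $\mathrm{Dom}(LR)\times\mathrm{Rng}(LR^\circ)$, so it remains to check $\Delta_{\mathsf{T}X}$ lies in that box, i.e.\ every $t\in\mathsf{T}X$ is in $\mathrm{Dom}(LR)$ — which is exactly what we are proving. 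I would break this circularity by using (L3) more carefully: fullness of $R$ gives a (possibly non-functional) section, but better, use that $R\supseteq$ graph of no function in general; instead observe $R;R^\circ\supseteq id_X$ and apply $L$ monotonicity to get $id_{\mathsf{T}X}\subseteq L(R;R^\circ)=L(R);L(R^\circ)\cap(\mathrm{Dom}(LR)\times\mathrm{Rng}(LR^\circ))$, and since the left side is all of $\Delta_{\mathsf{T}X}$, this \emph{forces} $\mathrm{Dom}(LR)=\mathsf{T}X$ directly (if some $t\notin\mathrm{Dom}(LR)$ then $(t,t)$ cannot be in the right-hand intersection, contradiction). This is the main subtlety and it resolves itself once one writes the intersection out.

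\medskip
\textbf{Part (2).} Now $R:X\rel Z$ is full only on $X$, and $i:Z\hookrightarrow Z'$ is an inclusion. I want $L(R;i)$ full on $\mathsf{T}X$. The trick is that although $R$ is not full on $Z$, the relation $R:X\rel Z$ \emph{is} full on both sides when we corestrict its codomain to $\mathrm{Rng}(R)$; write $R = R_0;j$ where $R_0:X\rel \mathrm{Rng}(R)$ is full on both sides and $j:\mathrm{Rng}(R)\hookrightarrow Z$. Then $R;i = R_0;(j;i) = R_0;k$ with $k:\mathrm{Rng}(R)\hookrightarrow Z'$ an inclusion. By Part (1), $L(R_0)$ is full on $\mathsf{T}X$. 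By (L3), $\mathsf{T}k\subseteq Lk$, and $\mathsf{T}k$ (being the image under $\mathsf{T}$ of the graph of a function, i.e.\ of $\mathsf{T}$ applied to $k$) has full domain $\mathsf{T}(\mathrm{Rng}(R))$. So $L(R_0);Lk \subseteq L(R_0;k)=L(R;i)$ by (L2), and $L(R_0);Lk$ is full on $\mathsf{T}X$ because $L(R_0)$ is full on $\mathsf{T}X$ and $Lk\supseteq\mathsf{T}k$ is full on $\mathsf{T}(\mathrm{Rng}(R))=\mathrm{Rng}(L R_0)$ — wait, I need $\mathrm{Rng}(LR_0)\subseteq\mathrm{Dom}(\mathsf{T}k)$, which holds since $\mathrm{Dom}(\mathsf{T}k)=\mathsf{T}(\mathrm{Rng}(R))$ and $LR_0:\mathsf{T}X\rel\mathsf{T}(\mathrm{Rng}(R))$. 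Hence every $t\in\mathsf{T}X$ has some $s\in\mathsf{T}(\mathrm{Rng}(R))$ with $(t,s)\in LR_0$, and then $(s,\mathsf{T}k(s))$ lies in $\mathsf{T}k\subseteq Lk$, so $(t,\mathsf{T}k(s))\in L(R;i)$. This gives fullness on $\mathsf{T}X$.

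\medskip
\textbf{Part (3).} Assume $L$ preserves diagonals, (L4): $L\Delta_X\subseteq\Delta_{\mathsf{T}X}$. For a function $f:X\rightarrow Y$, one inclusion $\mathsf{T}f\subseteq Lf$ is just (L3). For the reverse $Lf\subseteq \mathsf{T}f$: factor $id_X = f;f^\circ$? No — rather use $f^\circ;f\supseteq \Delta_{\mathrm{Rng}(f)}$ and $f;f^\circ = \Delta_X$ when $f$ is... this needs care since $f$ need not be injective. The standard argument: $f = f; id_Y$ and $f^\circ;f\supseteq \Delta_X$ is false in general, but $f;f^\circ \supseteq \Delta_X$ \emph{is} true (for every $x$, $(x,x)\in f;f^\circ$ via $f(x)$). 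Wait, that is also not right unless... actually $(x,x)\in f;f^\circ$ iff $\exists y$ with $(x,y)\in f$ and $(y,x)\in f^\circ$, i.e.\ $y=f(x)$, which always works; so indeed $\Delta_X\subseteq f;f^\circ$, in fact $\Delta_X = f;f^\circ$ precisely when $f$ is injective, but the inclusion $\subseteq$ always holds — no: $f;f^\circ$ also contains $(x,x')$ whenever $f(x)=f(x')$, so $\Delta_X\subseteq f;f^\circ$ always. Then $Lf;L(f^\circ)\subseteq L(f;f^\circ)$ by (L2), and I want to bound $L(f;f^\circ)$. Hmm, $f;f^\circ\supseteq\Delta_X$ goes the wrong way for applying (L4). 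Instead use the other composite: $f^\circ;f\subseteq\Delta_Y$ is false; rather $f;f^\circ\supseteq\Delta_X$... The correct classical trick is: $Lf = L(id_X;f)\subseteq \cdots$. Let me instead argue via $(Lf)^\circ;(Lf) = L(f^\circ);L(f)\subseteq L(f^\circ;f)$ and note $f^\circ;f\supseteq\Delta_Y$ restricted to $\mathrm{Rng}(f)$, still wrong direction. The genuinely working approach: take $(\xi,\eta)\in Lf$. Consider the inclusion $i:\mathrm{Rng}(f)\hookrightarrow Y$ and write $f = g;i$ with $g:X\twoheadrightarrow\mathrm{Rng}(f)$ surjective. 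By (L3) $\mathsf{T}g\subseteq Lg$ and $\mathsf{T}i\subseteq Li$. Using that $g;g^\circ$ need not be $\Delta$, but $g^\circ;g\supseteq \Delta_{\mathrm{Rng}(f)}$: no. I will follow the known proof pattern: first handle surjections and injections separately. For an injection $i$, $i^\circ;i = \Delta_{\mathrm{dom}}$ and $i;i^\circ\subseteq\Delta$, so $Li;L(i^\circ)\subseteq L(i;i^\circ)\subseteq L\Delta\subseteq\Delta$ by (L2),(L1),(L4); combined with $\mathsf{T}i\subseteq Li$ and $\mathsf{T}(i^\circ\text{-ish})$ one squeezes $Li = \mathsf{T}i$. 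For a surjection $g$, dualize using converse-preservation. Then compose. The main obstacle is precisely this Part (3) bookkeeping — getting the reverse inclusion $Lf\subseteq\mathsf{T}f$ cleanly from (L1)--(L4) without circular use of fullness — and I expect to spend most of the effort there, likely by the split-into-surjection-and-injection strategy, invoking Part (1)/(2) for the surjective piece and (L4) directly for the injective piece.
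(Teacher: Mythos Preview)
Your Part~(1) has a genuine gap: the quasi-functoriality identity reads $LR;LR^\circ = L(R;R^\circ)\cap(\mathrm{Dom}(LR)\times\mathrm{Rng}(LR^\circ))$, not the other way round. So from $\Delta_{\mathsf{T}X}\subseteq L(R;R^\circ)$ you cannot conclude that $\Delta_{\mathsf{T}X}$ lies in the box $\mathrm{Dom}(LR)\times\mathrm{Rng}(LR^\circ)$; an element $(t,t)$ can perfectly well sit in $L(R;R^\circ)$ while $t\notin\mathrm{Dom}(LR)$, with no contradiction. The circularity you noticed does \emph{not} resolve itself. The paper breaks it by factoring $R$ as a span of \emph{functions}: with $\pi_X:R\to X$ and $\pi_Z:R\to Z$ the projections, $R=\pi_X^\circ;\pi_Z$, and both $\pi_X,\pi_Z$ are surjective since $R$ is full on both sides. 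Set functors preserve surjections, so $\mathsf{T}\pi_X,\mathsf{T}\pi_Z$ are surjective; by (L3) this gives $\mathrm{Dom}(L\pi_X^\circ)=\mathsf{T}X$ and $\mathrm{Rng}(L\pi_Z)=\mathsf{T}Z$, so the quasi-functoriality box is all of $\mathsf{T}X\times\mathsf{T}Z$ and hence $LR = L\pi_X^\circ;L\pi_Z$, whose fullness now follows from $(\mathsf{T}\pi_X)^\circ;\mathsf{T}\pi_Z\subseteq L\pi_X^\circ;L\pi_Z$. The point you are missing is that quasi-functoriality only helps once you already control the domain/range of the two factors, and the span through surjective \emph{functions} is what provides that control via (L3).

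Your Part~(2) is fine once Part~(1) is repaired, though the paper takes a shorter route that avoids Part~(1) entirely: since $R;i$ is full on $X$, the axiom of choice gives a function $f:X\to Z'$ with (graph of) $f\subseteq R;i$, whence $\mathsf{T}f\subseteq Lf\subseteq L(R;i)$ and $\mathsf{T}f$ is full on $\mathsf{T}X$. For Part~(3), your surjection/injection split is unnecessarily complicated; the clean argument (which the paper cites from Marti--Venema) works for any diagonal-preserving lax extension: given $(\xi,\eta)\in Lf$, also $(\xi,\mathsf{T}f(\xi))\in\mathsf{T}f\subseteq Lf$ by (L3), so $(\mathsf{T}f(\xi),\eta)\in L(f^\circ);Lf\subseteq L(f^\circ;f)\subseteq L(\Delta_Y)\subseteq\Delta_{\mathsf{T}Y}$ by (L2), (L1), (L4), giving $\eta=\mathsf{T}f(\xi)$.
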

\begin{proof} 
For the proof of (1) consider the following argument:
Let $\pi_X: R\rightarrow X$ and $\pi_Z: R\rightarrow Z$ denote the projection maps. Since $R=(\pi_X)^{\circ};\pi_Z$ and $LR=L((\pi_X)^{\circ};\pi_Z)$, from quasi-functoriality of $L$ it follows that
$$L(\pi_X)^{\circ};L\pi_Z=L((\pi_X)^{\circ};\pi_Z)\cap (Dom L(\pi_X)^{\circ}\times Rng(L\pi_Z)).$$
But since $R=(\pi_X)^{\circ};\pi_Z$ is full on both sides, the projection maps $\pi_X$ and $\pi_Z$ are surjective. It then follows that $\textsf{T}\pi_X:\textsf{T}R\rightarrow \textsf{T}X$ and $\textsf{T}\pi_Z:\textsf{T}R\rightarrow \textsf{T}Z$ are surjective, because set functors preserve surjective-ness. So $Rng(\textsf{T}\pi_X)=Dom(\textsf{T}\pi_X)^{\circ}=\textsf{T}X$ and $Rng(\textsf{T}\pi_Z)=\textsf{T}Z$. Consequently we have
$$L((\pi_X)^{\circ};\pi_Z)\cap \textsf{T}X\times \textsf{T}Z=L(\pi_X)^{\circ};L\pi_Z,$$
which simply implies $L((\pi_X)^{\circ};\pi_Z)=L(\pi_X)^{\circ};L\pi_Z$. Now in order to prove fullness of $LR=L((\pi_X)^{\circ};\pi_Z)$ on $\textsf{T}X$ and $\textsf{T}Z$ it is sufficient to prove $L(\pi_X)^{\circ} ;L\pi_Z :\textsf{T}X\rel \textsf{T}Z$ is full on $\textsf{T}X$ and $\textsf{T}Z$. But we are done since
\begin{eqnarray}
\textsf{T}X &=& Dom((\textsf{T}\pi_X)^{\circ};(\textsf{T}\pi_Z)) \quad\quad \text{$\textsf{T}\pi_X$ is surjective} \nonumber \\
&\subseteq& Dom(L(\pi_X)^{\circ};(L\pi_Z)) \quad\quad \text{($L3$) } \nonumber
\end{eqnarray}
and
\begin{eqnarray}
\textsf{T}Z &=& Rng((\textsf{T}\pi_X)^{\circ};(\textsf{T}\pi_Z)) \quad\quad \text{$\textsf{T}\pi_Z$ is surjective} \nonumber \\
&\subseteq& Rng(L(\pi_X)^{\circ};(L\pi_Z)) \quad\quad \text{($L3$) } \nonumber
\end{eqnarray}

To prove $(2)$ notice that $R;i\subseteq X\times Y$ is full on $X$, so by axiom of choice there exists a map $f:X\rightarrow Y$ such that $f\subseteq (R;i)$. Hence we get $\mathsf{T}f\subseteq Lf\subseteq L(R;i)$, and because $\mathsf{T}f$ is full on $\mathsf{T}X$, $L(R;i)$ is also full on $\mathsf{T}X$.

\begin{equation*}
\begin{xy}
(15,15)*+{(R;i)}="a";%
(0,0)*+{Y}="c"; (30,0)*+{X}="b";%
{\ar@{->>}@/^0.5pc/^{\pi_{X}} "a";"b"};%
{\ar@{->}@/_0.5pc/_{\pi_{Y}} "a";"c"};%
{\ar "b";"c"}?*!/_3mm/{f};
\end{xy}
\end{equation*}

For the proof of $(3)$ we refer to ~\cite[Proposition 2]{jm:yv} (where if fact it is stated that $(3)$ holds for every lax extension $L$).
\end{proof}

Let us now summarize two facts that we will need about $L$-bisimulations in the sequel.
\begin{prop}
For a lax extension $L$ of $\mathsf{T}$ and $\mathsf{T}$-coalgebras $\mathbb{S}$, $\mathbb{S'}$ and $\mathbb{Q}$ the following hold:
\begin{itemize}
\item[(1)] The graph of a coalgebra morphism $f$ from $\mathbb{S}$ to $\mathbb{S'}$ is an $L$-bisimulation between $\mathbb{S}$ and $\mathbb{S'}$;
\item[(2)] if $R:S\rel Q$ respectively $R' :Q\rel S'$ are $L$-bisimulations between $\mathbb{S}$ and $\mathbb{Q}$ respectively $\mathbb{Q}$ and $\mathbb{S'}$, then $R;R':S\rel S'$ is an $L$-bisimulation between $\mathbb{S}$ and $\mathbb{S'}$.
\end{itemize}
\end{prop}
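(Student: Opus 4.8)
The plan is to verify both clauses directly from the axioms (L1)--(L3) of a lax extension; notably, neither diagonal preservation nor even preservation of converse is needed. Throughout I would write $\mathbb{S} = (S, \sigma)$, $\mathbb{S'} = (S', \sigma')$ and $\mathbb{Q} = (Q, \tau)$, and keep in mind the convention $g \circ f = f; g$ for the graphs of functions.

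For clause (1), let $f : \mathbb{S} \to \mathbb{S'}$ be a coalgebra morphism, so that $\mathsf{T}f \circ \sigma = \sigma' \circ f$ as functions $S \to \mathsf{T}S'$. Viewing $f$ as a relation $f : S \rel S'$, a pair lies in $f$ exactly when it has the form $(s, f(s))$, so it suffices to check $(\sigma(s), \sigma'(f(s))) \in Lf$ for every $s \in S$. By the morphism equation $\sigma'(f(s)) = (\mathsf{T}f)(\sigma(s))$, hence the pair $(\sigma(s), \sigma'(f(s)))$ lies in the graph of $\mathsf{T}f$, i.e. in the relation $\mathsf{T}f : \mathsf{T}S \rel \mathsf{T}S'$. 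Axiom (L3) gives $\mathsf{T}f \subseteq Lf$, so $(\sigma(s), \sigma'(f(s))) \in Lf$, as required.

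For clause (2), let $R : S \rel Q$ and $R' : Q \rel S'$ be $L$-bisimulations between $\mathbb{S}$ and $\mathbb{Q}$, and between $\mathbb{Q}$ and $\mathbb{S'}$, respectively. Take any $(s, s') \in R; R'$; by definition of relational composition there is $q \in Q$ with $(s, q) \in R$ and $(q, s') \in R'$. The bisimulation condition for $R$ at $(s,q)$ gives $(\sigma(s), \tau(q)) \in LR$, and the bisimulation condition for $R'$ at $(q, s')$ gives $(\tau(q), \sigma'(s')) \in LR'$. Composing, $(\sigma(s), \sigma'(s')) \in LR; LR'$, and by axiom (L2) we have $LR; LR' \subseteq L(R; R')$, so $(\sigma(s), \sigma'(s')) \in L(R; R')$. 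As $(s,s')$ was arbitrary, $R; R'$ is an $L$-bisimulation between $\mathbb{S}$ and $\mathbb{S'}$.

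There is essentially no obstacle here: both parts are immediate unwindings of the definitions, using (L3) and (L2) respectively. The only points that want a moment's care are the bookkeeping of the composition convention (so that "first $R$, then $R'$" is genuinely the relation $R;R'$ whose lifting is controlled by (L2)) and the observation that the naturality square of $f$, read pointwise, literally places the relevant pair into the graph of $\mathsf{T}f$, so that (L3) applies verbatim. The colored variants go through in exactly the same way, additionally transporting the coloring condition $\gamma = \gamma'$ along composition unchanged.
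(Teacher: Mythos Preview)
Your proof is correct and is the standard argument: clause~(1) follows immediately from (L3) applied to the morphism equation, and clause~(2) from (L2) applied to the composed pair. The paper itself does not supply a proof but merely cites \cite[Proposition~3]{jm:yv}; your argument is precisely the one that reference contains, so there is nothing to compare.
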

For the proof we refer to ~\cite[Proposition 3]{jm:yv}.

We will finish this section with a remark on some of the closure properties of the class of functors with a quasi-functorial lax extension:

\begin{fact} The collection of functors with a quasi-functorial lax extension \text{(FQL)} has the following properties:
\begin{itemize}
\item[(i)] the identity functor $I:\textsf{Set}\rightarrow \textsf{Set}$ is in \text{FQL};
\item[(ii)]for each set $D$, the constant functor $D:\textsf{Set}\rightarrow \textsf{Set}$ is in \text{FQL};
\item[(iii)] the product $X\mapsto\mathsf{T}_{1}(X)\times\mathsf{T}_{2}(X)$ of tow \text{FQL}s $\mathsf{T}_{1}$ and $\mathsf{T}_{2}$ is in \text{FQL};
\item[(iv)]the coproduct $X\mapsto\mathsf{T}_{1}(X)+\mathsf{T}_{2}(X)$ of tow \text{FQL}s $\mathsf{T}_{1}$ and $\mathsf{T}_{2}$ is in \text{FQL};

\item[(v)]the composition $X\mapsto(\mathsf{T}_{1}\circ\mathsf{T}_{2})(X)$ of a \text{FQL} functor $\mathsf{T}_{1}$ and a functor $\mathsf{T}_{2}$ which has a functorial lax extension, is  in \text{FQL}.
\end{itemize}
\end{fact}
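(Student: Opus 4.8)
The plan is to verify each of the five closure properties by exhibiting, in each case, a concrete candidate relation lifting on the composite functor and checking that it is a quasi-functorial lax extension. In several cases the lifting is the obvious one and the verification of (L1)–(L3) plus quasi-functoriality is a routine diagram chase; the point of the proof is really to record which constructions on liftings preserve quasi-functoriality, so I would organize it as: (i) and (ii) are immediate — for $I$ take $LR = R$, for the constant functor $D$ take $\overline{D}R = \Delta_D$ as in Example (ii); in both cases (L1)–(L3) are trivial and the quasi-functoriality identity $L(R);L(S) = L(R;S)\cap(\mathit{Dom}(LR)\times \mathit{Rng}(LS))$ reduces to a tautology (for $I$ both sides are $R;S$; for $D$ both sides are $\Delta_D$, noting $\mathit{Dom}(\Delta_D) = \mathit{Rng}(\Delta_D) = D = \mathsf{D}X$).

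For (iii) and (iv), given quasi-functorial lax extensions $L_1$ of $\mathsf{T}_1$ and $L_2$ of $\mathsf{T}_2$, I would define $(L_1\times L_2)R := \{((t_1,t_2),(u_1,u_2)) \mid (t_1,u_1)\in L_1R,\ (t_2,u_2)\in L_2R\}$ on $\mathsf{T}_1X\times\mathsf{T}_2X$, and analogously $(L_1 + L_2)R$ as the "disjoint union" lifting that relates a left-copy to a left-copy iff the $L_1$-components are related and similarly on the right. Monotonicity (L1) and (L3) follow componentwise; for (L2) and for the quasi-functoriality equation one checks both inclusions componentwise, using that $\mathit{Dom}((L_1\times L_2)R) = \mathit{Dom}(L_1R)\times\mathit{Dom}(L_2R)$ (and similarly for $\mathit{Rng}$), so the "defect term" $\mathit{Dom}\times\mathit{Rng}$ on the composite factors as a product of the component defect terms — this is the small calculation that makes quasi-functoriality, rather than plain functoriality, robust under these operations. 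For the coproduct the analogous bookkeeping with the two summands goes through because a pair related by $(L_1+L_2)R$ must lie in the same summand, so the composition $L(R);L(S)$ never mixes components.

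For (v), the composite $\mathsf{T}_1\circ\mathsf{T}_2$ with $L_1$ quasi-functorial and $L_2$ \emph{functorial}, I would take the expected lifting $(L_1 L_2)R := L_1(L_2 R) : \mathsf{T}_1\mathsf{T}_2X \rel \mathsf{T}_1\mathsf{T}_2Y$. Axioms (L1), (L3) are inherited directly (for (L3) one needs $\mathsf{T}_1\mathsf{T}_2 f \subseteq L_1(L_2 f)$, which follows from $\mathsf{T}_2 f\subseteq L_2 f$, then monotonicity (L1) of $L_1$, then (L3) of $L_1$). The crux is the quasi-functoriality identity for $L_1 L_2$: expand $L_1(L_2R);L_1(L_2S)$ using quasi-functoriality of $L_1$ to get $L_1(L_2R;L_2S)\cap(\mathit{Dom}\,L_1(L_2R)\times\mathit{Rng}\,L_1(L_2S))$, and then use \emph{functoriality} of $L_2$, i.e.\ $L_2R;L_2S = L_2(R;S)$, to rewrite the first factor as $L_1(L_2(R;S))$; finally identify the defect term with $\mathit{Dom}(L_1L_2 R)\times\mathit{Rng}(L_1L_2 S)$, which is immediate from the definition. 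This is exactly the place where functoriality of $L_2$ (rather than mere quasi-functoriality) is needed, since otherwise $L_2R;L_2S$ would itself carry a defect term and nesting the two defect terms would not collapse to the required single one; I expect this clause (v) to be the main obstacle, or at least the step requiring the most care about which hypothesis is used where. As a side remark one should note these constructions also preserve diagonal-preservation (L4) when that is wanted downstream, but since the statement only claims membership in FQL I would keep the focus on (L1)–(L3) and the quasi-functoriality equation.
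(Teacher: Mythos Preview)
Your proposal is correct and, for item (v) --- the only item the paper actually proves --- you take essentially the same approach: define $(L_1L_2)R := L_1(L_2R)$, and derive the quasi-functoriality identity by applying quasi-functoriality of $L_1$ to the pair $L_2R, L_2S$ and then collapsing $L_2R;L_2S$ to $L_2(R;S)$ via functoriality of $L_2$. The paper's argument is element-wise (pick $(\alpha,\beta)$ in the right-hand side and chase it into the left) whereas yours is equational, but the content is identical; you also correctly isolate the reason full functoriality of $L_2$ is required. Your additional sketches for (i)--(iv) go beyond what the paper provides, since the paper leaves those cases to the reader.
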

\begin{proof}
 Here we will give a proof for item $(v)$.
Suppose that $L_1$ is a quasi-functorial lax extension for $\mathsf{T}_1$ and $L_2$ is a functorial lax extension for $\mathsf{T}_2$. We claim that $L_{1}L_2$ is a quasi-functorial lax extension for $\mathsf{T}_1\circ\mathsf{T}_2$. First observe that since $L_1$ and $L_2$ are lax extensions, $L_{1}L_2$ is also a lax extension. Take $(\alpha, \beta)\in L_{1}L_2(R;S)\cap Dom(L_{1}L_2R)\times Rng(L_{1}L_2S).$ We get that  $(\alpha, \beta)\in Dom(L_{1}(L_2R))\times Rng(L_{1}(L_2S))$  and by  functoriality of $L_2$, $(\alpha, \beta)\in L_{1}(L_{2}R ; L_{2}S)$. Now from quasi-functoriality of $L_1$ we get that:
 
$$(\alpha , \beta)\in L_1(L_2R);L_1(L_2S)=L_1L_2R;L_1L_2S.$$ 
\end{proof}

\section{Coalgebraic Fixpoint Logic and Automata}
\subsection{Coalgebraic Fixpoint Logic}
In this section we show how to define the syntax and semantics of a coalgebraic fixpoint logic, using a quasi-functorial lax extension $L$ of $\mathsf{T}$. For this purpose from now on we fix a functor $\mathsf{T}$ with a quasi-functorial lax extension $L$. Recall that by our convetion~\ref{int} $\mathsf{T}$ preserves all inclusions and finite intersections. We also fix a set $\mathsf{P}$ of propositional letters and assume that $L$ preserves diagonals.
Before going to the definition of coalgebraic fixpoint logic, we need an auxiliary definition.
\begin{dfn}\label{base}
Given a functor $\textsf{T}$, we define for every set $X$ the function
\begin{eqnarray}
Base: \textsf{T}_{\omega}X&\rightarrow& \mathcal{P}_{\omega}X \nonumber \\
\alpha &\mapsto& \bigcap\{X'\subseteq X\mid \alpha\in \textsf{T}X'\}. \nonumber
\end{eqnarray}
\end{dfn}
This is well-defined because for any $\alpha\in\textsf{T}_{\omega}X$ there is a finite $X''\subseteq X$ such that $\alpha\in\textsf{T}X''$. The definition is useful because for all $\alpha\in\textsf{T}_{\omega}X$ we have that $Base(\alpha)\in \mathcal{P}_{\omega}X $ is the least set $U\in\mathcal{P}_{\omega}X $ such that $\alpha\in\textsf{T}U$. The existence of such a $U$ is given by the fact that $\textsf{T}$ preserves finite intersections[Convention~\ref{int}].
The language of the coalgebraic fixpoint logic $\mu\mathcal{L}^{\mathsf{T}}_L(\mathsf{P})$ is defined as follows:
\begin{dfn} \label{syntax}
For $\mathsf{P}$ as the set of propositional letters, define the language $\mu\mathcal{L}_{L}^{\mathsf{T}}(\mathsf{P})$ by the following grammar:
$$a::= p\mid \neg a\mid \bigvee A\mid\nabla{\alpha}\mid\mu p.a,$$
where $p\in\mathsf{P}$, $A\in\mathcal{P}_{\omega}(\mu\mathcal{L}_{L}^{\mathsf{T}})$ and $\alpha\in \mathsf{T}_\omega (\mu\mathcal{L}_{L}^{\mathsf{T}}(\mathsf{P}))$. There is a restriction on the formulation of the formulas $\mu p.a$, namely, no occurrence of $p$ in $a$ may be in the scope of odd number of negations.\footnote{For a precise definition of the notions \emph{scope} and \emph{occurrence}, we can inductively define a construction tree of a formula, where the children of a node labeled $\nabla\alpha$ are given by the formulas in $Base(\alpha)$. }
\end{dfn}
\begin{rem}\label{subset} For a given formula $a\in\mu\mathcal{L}_{L}^{\mathsf{T}}(\mathsf{P})$, $\mathsf{P}_{a}\subseteq{\mathsf{P}}$ denotes the set of all propositional letters occurring in $a$. Observe that for $\mathsf{Q'}\subseteq \mathsf{Q}\subseteq \mathsf{P}$, we have that $\mu\mathcal{L}_{L}^{\mathsf{T}}(\mathsf{Q'})\subseteq \mu\mathcal{L}_{L}^{\mathsf{T}}(\mathsf{Q})$. This can be proved by induction on the complexity of formulas in $\mu\mathcal{L}_{L}^{\mathsf{T}}(\mathsf{Q'})$.
\end{rem}
Before we turn to the coalgebraic semantics of this language, there are a number of syntactic definitions to be fixed. We start with the definition of subformula.
\begin{dfn}
We will write $b\trianglelefteq a$ if $b$ is a subformula of $a$. Inductively we define the set of $\emph{Sfor}(a)$ of subformulas of $a$ as follows:
\begin{eqnarray}
\emph{Sfor}(p)&:=& \{p\}, \nonumber \\
\emph{Sfor}(\neg a)&:=& \{\neg a\}\cup \emph{Sfor}(a), \nonumber \\
\emph{Sfor}\left( \bigvee A \right) &:=& \{ \bigvee A \} \cup \bigcup_{a\in A}\emph{Sfor}(a), \nonumber \\
\emph{Sfor}(\mu p\cdot a)&:=& \{\mu p\cdot a\}\cup\emph{Sfor}(a), \nonumber \\
\emph{Sfor}\left( \nabla\alpha \right) &:=& \{ \nabla\alpha\} \cup \bigcup_{a\in Base(\alpha)}\emph{Sfor}(a) \nonumber
\end{eqnarray}
The elements of $Base(\alpha)$ will be called the immediate subformulas of $\nabla\alpha$.
\end{dfn}
\begin{dfn} A formula $a\in\mu\mathcal{L}_{L}^{\mathsf{T}}(\mathsf{P})$ is \emph{guarded} if every subformula $\mu p . b$ of $a$ has the property that all occurences of $p$ inside $b$ are within the scope of a $\nabla$.
\end{dfn}
We now introduce the semantics of coalgebraic fixpoint logic. For this purpose we define the notion of a $\mathsf{T}$-model over a set $\mathsf{P}$ of propositional letters.
\begin{dfn}
A $\mathsf{T}$-model $\mathbb{S}=(S,\sigma, V )$ is a $\mathsf{T}$-coalgebra $(S,\sigma)$ together with a valuation $V$ that is a function $V : \mathsf{P}\rightarrow\mathcal{P}(S)$.
\end{dfn}
Using the fixed quasi-functorial lax extension $L$ for the functor $\mathsf{T}$ we can define the semantics for the language $\mu\mathcal{L}_{L}^{\mathsf{T}}(\mathsf{P})$ on $\mathsf{T}$-models, by giving the definition of the satisfaction relation $\Vdash_{\mathbb{S}}:S\rel \mu\mathcal{L}_{L}^{\mathsf{T}}(\mathsf{P})$ for a $\mathsf{T}$-model $\mathbb{S}= (S,\sigma,V)$.
\begin{dfn}\label{semantics} Before going to the definition of the satisfaction relation, we need to fix some notation:
For $X\subseteq S$, $V[p\mapsto X]$ denotes the valuation that is exactly like $V$ apart from mapping $p$ to $X$. We also use ${\llbracket a \rrbracket}_{\mathbb{S}}$ for the extension of formula $a$ in a $\mathsf{T}$-model $\mathbb{S}$: ${\llbracket a \rrbracket}_{\mathbb{S}}:=\{s\in S\mid s\Vdash_{\mathbb{S}} a\}$.Then ${\llbracket a \rrbracket}_{\mathbb{S}[p\mapsto X]}$ denotes the extension of $a$ considering the valuation $V[p\mapsto X]$, instead of $V$.\\ Now we are ready to define the satisfaction relation as follows:
\begin{eqnarray*}
s\Vdash_{\mathbb{S}}p &\text{iff}& s\in V(p)\\
s\Vdash_{\mathbb{S}}\neg a &\text{iff}& \text{not}~s\Vdash_{\mathbb{S}} a\\
s\Vdash_{\mathbb{S}} \bigvee A &\text{iff}& s\Vdash_{\mathbb{S}} a~\text{for} ~\text{some}~a\in A\\
s\Vdash_{\mathbb{S}} \nabla\alpha &\text{iff}& (\sigma(s),\alpha)\in L\Vdash_{\mathbb{S}}\\
s\Vdash_{\mathbb{S}} \mu p.a &\text{iff}& s\in\bigcap\{X\subseteq S\mid {\llbracket a \rrbracket}_{\mathbb{S}[p\mapsto X]}\subseteq X\}.\\
\end{eqnarray*}
\end{dfn}
\begin{rem}
The clauses in Definition~\ref{semantics} are not stated in a correct recursive way. In the recursive clause for the $\nabla$ modality we make use of the unrestricted satisfaction relation $\Vdash_{\mathbb{S}}$ that has yet to be defined. We can only suppose that $\Vdash_{\mathbb{S}}\mid_{S\times Base(\alpha)}$ is already defined. The actual recursive definition is that $s\Vdash_{\mathbb{S}}\nabla\alpha$ iff $(\sigma(s),\alpha)\in L(\Vdash_{\mathbb{S}}\mid_{S\times Base(\alpha)})$. To see why this is equal to the clause given above, see ~\cite[Proposition 6]{jm:yv}.
\end{rem}
Given a valuation $V : \mathsf{P}\rightarrow\mathcal{P}(S)$, one can think of it as a coloring $\gamma_{V}:S\rightarrow\mathcal{P}(\mathsf{P})$ that maps point $s\in S$ to a set of states in $A$: $$\gamma_{V}(s):=\{a\in A\mid s\in V(a)\}.$$ So a $\mathsf{T}$-model $\mathbb{S}=(S,\sigma, V )$ can also be seen as a $\mathcal{P}(\mathsf{P})$-colored $\mathsf{T}$-coalgebra $\hat{\mathbb{S}}=(S, \sigma, \gamma_{V})$. The \emph{projection} of a $\mathcal{P}(\mathsf{P})$-colored $\mathsf{T}$-coalgebra $\mathbb{S}=(S,\sigma, \gamma )$ to a set $\mathsf{Q}\subseteq\mathsf{P}$ is the $\mathcal{P}(\mathsf{Q})$-colored $\mathsf{T}$-coalgebra $\mathbb{S}^\mathsf{Q}=(S, \sigma, \gamma^{\mathsf{Q}})$ where $\gamma^{\mathsf{Q}}: S\rightarrow \mathcal{P}(\mathsf{Q})$, $s\mapsto\gamma(s)\cap \mathsf{Q}.$
\begin{dfn}
Given a set $\mathsf{Q}\subseteq\mathsf{P}$, an $L_{\mathsf{Q}}$-\emph{bisimulation} between two $\mathsf{T}$-models $\mathbb{S}$ and $\mathbb{Y}$ is defined to be an $L_{\mathcal{P}(\mathsf{Q})}$-bisimulation between $\hat{\mathbb{S}}^{\mathsf{Q}}$ and $\hat{\mathbb{Y}}^{\mathsf{Q}}$.
It follows that a relation $R:S\rel Y$ is an $L_{\mathsf{Q}}$-{bisimulation} between $\mathsf{T}$-models $\mathbb{S}=(S, \sigma, V_S)$ and $\mathbb{Y}=(Y,\lambda, V_Y)$ if and only if $R$ is an $L$-bisimulation between $\mathsf{T}$-coalgebras $\mathbb{S}=(S,\sigma)$ and $\mathbb{Y}=(Y, \lambda)$ and $R$ preserves the truth of all propositional letters in $\mathsf{Q}$, that is for all $(s,y)\in R$ we have that for all $ p\in \mathsf{Q}$ $$ s\in V_S(p)~\text{iff}~ y\in V_Y(p).$$
\end{dfn}
From this definition, it is easy to see that for any $\mathsf{Q}'\subseteq\mathsf{Q}$, if a relation $R$ is an $L_{\mathsf{Q}}$-bisimulation between $\mathsf{T}$-models $\mathbb{S}$ and $\mathbb{Y}$, then it is also an $L_{\mathsf{Q}'}$-bisimulation between them.
\begin{dfn}Given a propositional letter $p\in\mathsf{P}$, a relation $R:S\rel S'$ is an \emph{up-to}-$p$ $L_{\mathsf{P}}$-\emph{bisimulation} between two $\mathsf{T}$-models $\mathbb{S}= (S,\sigma,V)$ and $\mathbb{S'}= (S',\sigma',V')$, if it is an $L_{\mathsf{P}\setminus\{p\}}$-bisimulation between $\mathsf{T}$-models $\mathbb{S}$ and $\mathbb{S'}$. We write $ s\leftrightarroweq^{L}_{p} s'$ if $s$ and $s'$ are up-to-$p$ $L_{\mathsf{P}}$-bisimilar, that is where we disregard the proposition letter $p$.
\end{dfn}
Now we are going to look at the expressive power of $\mu\mathcal{L}^{\mathsf{T}}_L(\mathsf{P})$ with respect to states in $\mathsf{T}$-models. For this, we start with a definition.
\begin{dfn}
Two states $s$ in $\textsf{T}$-model $\mathbb{S}=(S,\sigma,V)$ and $s'$ in $\textsf{T}$-model $\mathbb{S'}=(S',\sigma',V')$ are called \emph{equivalent} for formulas in $\mu \mathcal{L}_L^{\textsf{T}}(\textsf{P})$ if $s\Vdash_{\mathbb{S}} a$ iff $s'\Vdash_{\mathbb{S}'} a$, for all $a\in \mu \mathcal{L}_L^{\textsf{T}}(\textsf{P})$.
\end{dfn}
An important property of our coalgebraic fixpoint logic is that truth is bisimulation invariant. This fact is given by the following proposition.
\begin{prop}
Given a state $s$ in a $\textsf{T}$-model $\mathbb{S}=(S,\sigma,V)$ and a state $s'$ in a $\textsf{T}$-model $\mathbb{S'}=(S',\sigma',V')$, if $s$ and $s'$ are $L_\textsf{P}$-bisimilar then $s$ and $s'$ are equivalent for formulas in $\mu \mathcal{L}_L^{\textsf{T}}(\textsf{P})$.
\end{prop}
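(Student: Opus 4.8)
The plan is to prove that $L_{\mathsf{P}}$-bisimilarity implies equivalence for all formulas in $\mu\mathcal{L}_{L}^{\mathsf{T}}(\mathsf{P})$ by induction on the complexity of the formula. So let $R : S \rel S'$ be an $L_{\mathsf{P}}$-bisimulation between $\mathsf{T}$-models $\mathbb{S} = (S,\sigma,V)$ and $\mathbb{S'} = (S',\sigma',V')$ with $(s,s') \in R$. I want to show $s \Vdash_{\mathbb{S}} a$ iff $s' \Vdash_{\mathbb{S'}} a$ for every $a$. Since $R$ being an $L_{\mathsf{P}}$-bisimulation means it is an $L$-bisimulation between the underlying $\mathsf{T}$-coalgebras that additionally preserves the truth of every propositional letter in $\mathsf{P}$, the base case ($a = p$) is immediate, and the Boolean cases ($\neg a$, $\bigvee A$) are routine from the induction hypothesis.

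The modal case $a = \nabla\alpha$ is the heart of the matter. Here I would use the genuine recursive reading of the semantics from the Remark following Definition~\ref{semantics}: $s \Vdash_{\mathbb{S}} \nabla\alpha$ iff $(\sigma(s),\alpha) \in L(\Vdash_{\mathbb{S}} \mid_{S \times Base(\alpha)})$. By the induction hypothesis, the relation $R' := R \mid_{\llbracket \cdot \rrbracket}$ suitably restricted behaves well on $Base(\alpha)$: more precisely, if $(t,t') \in R$ and $b \in Base(\alpha)$ then $t \Vdash_{\mathbb{S}} b$ iff $t' \Vdash_{\mathbb{S'}} b$. This gives the relational inclusions $R^{\circ} \mathbin{;} (\Vdash_{\mathbb{S}} \mid_{S \times Base(\alpha)}) \subseteq (\Vdash_{\mathbb{S'}} \mid_{S' \times Base(\alpha)})$ and, symmetrically, $R \mathbin{;} (\Vdash_{\mathbb{S'}} \mid_{S' \times Base(\alpha)}) \subseteq (\Vdash_{\mathbb{S}} \mid_{S \times Base(\alpha)})$. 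Now suppose $(s,s') \in R$ and $s \Vdash_{\mathbb{S}} \nabla\alpha$, so $(\sigma(s),\alpha) \in L(\Vdash_{\mathbb{S}})$. Since $(s,s') \in R$ and $R$ is an $L$-bisimulation, $(\sigma(s),\sigma'(s')) \in LR$. Using (L2), $(\sigma'(s'),\alpha) \in LR^{\circ} \mathbin{;} L(\Vdash_{\mathbb{S}}) \subseteq L(R^{\circ} \mathbin{;} \Vdash_{\mathbb{S}})$, and by (L1) together with the inclusion above this is contained in $L(\Vdash_{\mathbb{S'}} \mid_{S' \times Base(\alpha)})$; hence $s' \Vdash_{\mathbb{S'}} \nabla\alpha$. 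The converse direction is symmetric, using $LR \mathbin{;} L(\Vdash_{\mathbb{S'}}) \subseteq L(R \mathbin{;} \Vdash_{\mathbb{S'}})$. One subtlety to check is that $L(\Vdash_{\mathbb{S}})$ and $L(\Vdash_{\mathbb{S}} \mid_{S \times Base(\alpha)})$ agree on the relevant pair, which is exactly the content cited from ~\cite[Proposition 6]{jm:yv}; I would invoke that.

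For the fixpoint case $a = \mu p.b$ I would argue as follows. By the induction hypothesis applied to $b$ — which holds with respect to \emph{any} pair of valuations related by a bisimulation that preserves all letters including $p$ — I first observe that if $R$ is an $L_{\mathsf{P}}$-bisimulation between $(S,\sigma,V)$ and $(S',\sigma',V')$ and $X \subseteq S$, $X' \subseteq S'$ satisfy the compatibility condition that $R$ restricted to $X \times X'$ remains full on both sides in the appropriate sense, then $R$ is still an $L_{\mathsf{P}}$-bisimulation between $(S,\sigma,V[p\mapsto X])$ and $(S',\sigma',V'[p\mapsto X'])$. The cleanest route is to show that the sets $\llbracket \mu p.b \rrbracket_{\mathbb{S}}$ and $\llbracket \mu p.b \rrbracket_{\mathbb{S'}}$ are "matched" by $R$: define $X^{*} := \llbracket \mu p.b \rrbracket_{\mathbb{S'}}$ and consider its $R$-preimage-style closure inside $S$; using that $\mu p.b$ denotes the least prefixed point of the monotone operator $X \mapsto \llbracket b \rrbracket_{\mathbb{S}[p\mapsto X]}$, and that the induction hypothesis lets us transport prefixed points along $R$, conclude that $s \in \llbracket \mu p.b \rrbracket_{\mathbb{S}}$ implies $s' \in \llbracket \mu p.b \rrbracket_{\mathbb{S'}}$ and vice versa. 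Concretely: let $Z' \subseteq S'$ be any prefixed point of the operator on $\mathbb{S'}$, i.e. $\llbracket b \rrbracket_{\mathbb{S'}[p\mapsto Z']} \subseteq Z'$; pull it back to $Z := \{t \in S \mid \exists t' \in Z',\ (t,t') \in R\}$ together with a symmetric condition, show $R$ is an $L_{\mathsf{P}}$-bisimulation between $\mathbb{S}[p\mapsto Z]$ and $\mathbb{S'}[p\mapsto Z']$ (this needs (L1)–(L3) and the fact, available from Proposition~\ref{diag}, that fullness is preserved), apply the induction hypothesis for $b$ to get $\llbracket b \rrbracket_{\mathbb{S}[p\mapsto Z]} \subseteq Z$, hence $\llbracket \mu p.b \rrbracket_{\mathbb{S}} \subseteq Z$; then $(s,s') \in R$ and $s \Vdash \mu p.b$ forces $s' \in Z'$ for every such $Z'$, i.e. $s' \Vdash \mu p.b$. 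The converse is symmetric.

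The main obstacle I anticipate is the fixpoint case, specifically getting the bookkeeping right when transferring a prefixed point across the bisimulation $R$: one must produce a subset of the other model that is genuinely $R$-matched (so that the induction hypothesis for $b$ applies to the \emph{modified} valuations, not just the original ones), and this requires being careful that the restricted relation is still a bisimulation — which is where fullness preservation (Proposition~\ref{diag}(1)) and the monotonicity of relation lifting (L1) do the real work. The $\nabla$ case, by contrast, is essentially a three-line diagram chase once the recursive reading of the semantics and Proposition~6 of ~\cite{jm:yv} are in hand. Note that only (L1)–(L3) are used here; preservation of diagonals (L4) and quasi-functoriality are not needed for this particular proposition, though they are available.
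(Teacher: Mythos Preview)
The paper does not actually prove this proposition; it simply refers to \cite[Proposition 5.14]{yv} and \cite[Proposition 4.11]{jm} together with monotonicity of lax extensions. Your propositional, Boolean, and $\nabla$ cases are fine and constitute exactly the kind of argument those references make: the chase $(\sigma'(s'),\alpha)\in LR^{\circ};L(\Vdash_{\mathbb{S}})\subseteq L(R^{\circ};\Vdash_{\mathbb{S}})\subseteq L(\Vdash_{\mathbb{S}'})$ using (L1)--(L2) and the induction hypothesis on $Base(\alpha)$ is the standard one.

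Your fixpoint case, however, does not go through as written. For $R$ to remain an $L_{\mathsf P}$-bisimulation between $\mathbb{S}[p\mapsto Z]$ and $\mathbb{S}'[p\mapsto Z']$ you need that $(t,t')\in R$ implies $t\in Z\Leftrightarrow t'\in Z'$. With $Z:=\{t\mid \exists t'\in Z',\,(t,t')\in R\}$ you only get the implication $t'\in Z'\Rightarrow t\in Z$; the converse fails whenever $t$ is $R$-related to some state in $Z'$ and also to some state outside $Z'$. The phrase ``together with a symmetric condition'' does not repair this: the universal pullback $\{t\mid \forall t',\,(t,t')\in R\Rightarrow t'\in Z'\}$ has the dual defect. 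Proposition~\ref{diag} on fullness of $LR$ is irrelevant here --- the obstruction is purely set-theoretic, about whether $R$ respects the \emph{chosen} interpretation of $p$, and no lax-extension axiom helps. Moreover, even if $Z$ were a prefixed point, from $s\in Z$ and $(s,s')\in R$ your existential definition only yields \emph{some} $t'\in Z'$ with $(s,t')\in R$, not $s'\in Z'$.

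The standard fix (and what the cited references do, in one form or another) is to replace the prefixed-point pullback by transfinite approximants: set $X_0=\emptyset=X'_0$, $X_{\alpha+1}=\llbracket b\rrbracket_{\mathbb{S}[p\mapsto X_\alpha]}$, $X'_{\alpha+1}=\llbracket b\rrbracket_{\mathbb{S}'[p\mapsto X'_\alpha]}$, take unions at limits, and prove by ordinal induction that $R$ matches $X_\alpha$ with $X'_\alpha$ (i.e.\ $(t,t')\in R\Rightarrow(t\in X_\alpha\Leftrightarrow t'\in X'_\alpha)$). The base case is trivial, the successor step is exactly your induction hypothesis for $b$ applied to valuations that \emph{are} genuinely $R$-matched by the ordinal IH, and the limit case is immediate. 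This is where the positivity of $p$ in $b$ does its work (monotonicity of the operator), and no appeal to quasi-functoriality or (L4) is needed, as you correctly observe.
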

For the proof of this proposition we refer to ~\cite[Proposition 5.14]{yv}, ~\cite[Proposition 4.11]{jm} and the fact that lax extensions are monotone.

Now we are ready to state the last semantic result we will need through out this paper.
\begin{prop} Each formula in $\mu\mathcal{L}_{L}^{\mathsf{T}}(\mathsf{P})$ can be transformed into an equivalent guarded formula in $\mu\mathcal{L}_{L}^{\mathsf{T}}(\mathsf{P})$.
\end{prop}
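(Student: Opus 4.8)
The plan is to prove this by induction on the complexity of formulas, transforming every $\mu$-subformula into guarded form from the inside out. The only interesting case is a formula of the shape $\mu p.a$ where, by the inductive hypothesis, $a$ is already guarded and we must produce an equivalent guarded formula for $\mu p.a$ itself. So the crux is: given a guarded formula $a$ with a designated free variable $p$ (appearing only positively), find a guarded formula equivalent to $\mu p.a$.

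\textbf{Reducing to the unguarded occurrences of $p$.} First I would isolate, within $a$, the occurrences of $p$ that are \emph{not} in the scope of a $\nabla$; call these the ``bad'' occurrences. By repeatedly unfolding, one can rewrite $a$ in a normal form that separates the bad occurrences of $p$ from the rest. Concretely, I would argue that $a$ is equivalent (as a map in the variable $p$, over all $\mathsf{T}$-models) to a formula built by finitely many applications of $\bigvee$, $\neg\neg$-elimination and the other connectives so that every bad occurrence of $p$ sits at ``top level'' relative to the $\nabla$-guarded parts. The standard trick (as in Kozen's and D'Agostino--Hollenberg's treatment of the modal $\mu$-calculus) is: writing $a(p) = a(p,q_1,\dots,q_n)$, the least fixpoint $\mu p.a(p)$ equals $\mu p.a'(p)$ where $a'$ is obtained by a bounded number of \emph{simultaneous} unfoldings that push the bad $p$'s past the disjunctions and negations until they can be eliminated. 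Because the language has Boolean duals ($\bigvee$ gives $\bigwedge$ via $\neg$, and there is a dual $\Delta$-style modality derivable from $\nabla$), one can use the distributive and fixpoint identities $\mu p.(b \vee c(p)) = \mu p.c(b \vee p)$-type manipulations and, crucially, the identity $\mu p.(p \vee c(p)) = \mu p. c(p)$ to \emph{delete} a top-level bad occurrence of $p$ outright. Iterating, all bad occurrences disappear, leaving a formula in which every remaining occurrence of $p$ is guarded.

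\textbf{Handling nested fixpoints and the $\nabla$-normal form.} Two technical points need care. (1) Because $a$ may contain other fixpoint operators $\mu q.b$ binding variables on which $p$ depends, the unfolding must be done \emph{simultaneously} for the whole fixpoint block, or one must first bring $a$ into a form where the dependency graph of bound variables is acyclic-modulo-guardedness; this is the bookkeeping that makes the argument lengthy but not conceptually hard. (2) The rewriting must stay \emph{inside} the coalgebraic syntax: pushing Booleans through $\nabla$ requires the distributive laws for Moss' modality, i.e.\ the fact that $\nabla\alpha \wedge \nabla\beta$ and $\bigvee$ over $\nabla$'s can be re-expressed using $\nabla$ applied to suitable elements of $\mathsf{T}_\omega$. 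These laws hold for $\mathsf{T}$ with a quasi-functorial lax extension $L$ preserving diagonals (this is exactly the setting fixed at the start of Section~3 and is what is needed for the $\nabla$-one-step logic to be closed under the relevant Boolean operations); I would invoke the corresponding one-step completeness/normal-form lemma for $\nabla$ from the coalgebraic literature (Venema's work cited as \cite{yv}) rather than reproving it.

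\textbf{Correctness of the transformation.} After producing a candidate guarded formula $a^g$ with $\mu p.a^g \equiv \mu p.a$, I would verify equivalence semantically: every rewriting step either is a propositional tautology, or is an instance of a standard fixpoint law ($\mu$-unfolding $\mu p.a \equiv a[\mu p.a/p]$, the rolling/diagonal rules, and the absorption rule deleting a top-level $p$), and each such law preserves $\llbracket\cdot\rrbracket_{\mathbb{S}}$ in every $\mathsf{T}$-model by the monotonicity of the semantic operations (guaranteed by positivity of $p$ and monotonicity of $L$). Composing with the inductive hypothesis on the proper subformulas of $a$ then yields the result for $\mu p.a$, and hence, by induction, for every formula of $\mu\mathcal{L}_L^{\mathsf{T}}(\mathsf{P})$.

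\textbf{Main obstacle.} The hard part will be the simultaneous-unfolding bookkeeping for nested and mutually-dependent fixpoints together with making sure the Boolean-over-$\nabla$ rewriting stays within the finitary coalgebraic syntax $\mathsf{T}_\omega(\mu\mathcal{L}_L^{\mathsf{T}}(\mathsf{P}))$ and terminates; establishing a terminating measure (e.g.\ on the number of bad occurrences weighted by nesting depth) is the real content. Everything else is an application of standard fixpoint identities and the $\nabla$-normal-form machinery already available for this class of functors.
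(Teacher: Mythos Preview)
Your inductive approach is essentially what the paper indicates: the paper does not actually give a proof but simply states that the result ``can be proved by induction on the complexity of formulas'' and refers to \cite[Proposition~5.15]{yv}. Your sketch of the inductive step via fixpoint unfolding and the absorption law $\mu p.(p\vee c(p))\equiv\mu p.\,c(p)$ is the standard argument carried out in that reference, so you are aligned with the paper.

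One caveat worth flagging: your point (2), invoking distributive laws for $\nabla$ in order to ``push Booleans through $\nabla$'', is both unnecessary and risky at this level of generality. It is unnecessary because unguarded occurrences of $p$ are by definition \emph{outside} every $\nabla$, so the entire manipulation lives in the Boolean/fixpoint layer and never needs to enter the interior of a $\nabla$-formula; no one-step completeness or modal normal-form lemma is required. It is risky because the full suite of $\nabla$-distributive laws is established for the Barr extension of weak-pullback-preserving functors, not for an arbitrary quasi-functorial lax extension, so relying on them would import precisely the hypothesis the paper is trying to avoid. Drop that ingredient and your argument goes through cleanly. (Also, the displayed identity $\mu p.(b\vee c(p))=\mu p.\,c(b\vee p)$ is not valid in general; the law you actually need, and correctly highlight as ``crucial'', is the absorption $\mu p.(p\vee c(p))=\mu p.\,c(p)$ together with its dual.)
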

It can be proved by induction on the complexity of formulas, see~\cite[Proposition 5.15]{yv}
\begin{conv} Throughout this paper we always assume $\mu\mathcal{L}_{L}^{\mathsf{T}}(\mathsf{P})$-formulas to be guarded.
\end{conv}
\subsection{Coalgebraic Automata}
Coalgebraic automata are supposed to operate on pointed coalgebras. Basically, the idea is that an initialized $\mathsf{T}$-automaton will either \emph{accept} or \emph{reject} a given pointed $\mathsf{T}$-coalgebra. In the following section, we will recall the basic definitions from coalgebraic automata theory.
\begin{dfn}\label{automata}
Let $\mathsf{T}:\mathsf{Set}\rightarrow\mathsf{Set}$ be a set functor. A (non-deterministic) $\mathsf{T}$-automaton over a color set $C$ is a triple $\mathbb{A}=(A,\Delta,\Omega)$, with $A$ some finite set (of states), $\Delta:A\times C\rightarrow\mathcal{P}(\mathsf{T}A)$ the \emph{transition function} and $\Omega:A\rightarrow\omega$ a \emph{parity map}. The \emph{initialized} version of $\mathbb{A}$ is the pair $(\mathbb{A},a)$ consisting of an automaton $\mathbb{A}$ together with an
element $a\in A$, which we call it's \emph{initial} state.
\end{dfn}
The acceptance condition for $\mathsf{T}$-automata is formulated in terms of a parity game\cite {ck:yv}. The acceptance game $\mathcal{G}(\mathbb{S},\mathbb{A})$ between initialized automaton $(\mathbb{A},a_I)$ and a pointed coalgebra $(\mathbb{S},s_I)$ is given by the Table 1.
The game is played by two players: \'{E}loise ($\exists$) and Ab\'{e}lard ($\forall$). A \emph{match} of the game is a (finite or infinite) sequence of positions which is given by the two players moving from one position to another according to the rules of Table 1.
Let use now give the formal definition of acceptance game.

\begin{dfn} Let $(\mathbb{A},a_I)$ be an initialized $\mathsf{T}$-automaton over the color set $C$. Furthermore let $(\mathbb{S},s_I)=(S,\sigma,\gamma, s_I)$ be a pointed $C$-colored $\mathsf{T}$-coalgebra. Then the \emph{acceptance game} $\mathcal{G}(\mathbb{S},\mathbb{A})$ is given by the following table:
\begin{table}[h]
\centering
\begin{tabular}{|l|c|l|l|}
\hline
Position & Player & Admissible moves & Priority \\ \hline
$(s,a)\in S\times A$ & $\exists$ & $(\sigma(s),\phi)~s.t. ~\phi\in\Delta(a,\gamma(s))$ & $\Omega(a)$\\
$(\sigma(s),\phi)\in\mathsf{T}S\times\mathsf{T}A$ & $\exists$ & $\{Z:S\rel A\mid(\sigma(s),\phi)\in LZ$ & 0 \\
$Z\subseteq S\times A$& $\forall$ & $Z$ & 0 \\
\hline
\end{tabular}
\caption{Acceptance game for $\mathsf{T}$-automaton}
\end{table}

Positions of the form $(s,a)\in S\times A$ will be called \emph{basic positions} of the game. A partial play of the game of the form $(s,a)(\sigma(s),\phi)Z(t,b)$ with $(s,a)\in S\times A$, $(\sigma(s),\phi)\in\mathsf{T}S\times\mathsf{T}A$, $Z:S\rel A$ and $(t,b)\in Z$ will be called a \emph{round} of the play. A $\emph{positional}$ or $\emph{history free strategy}$ for $\exists$ is a pair of functions $$(\Phi:S\times{A}\rightarrow\mathsf{T}A, Z:S\times A\rightarrow\mathcal{P}(S\times A)).$$
Such a strategy is $\emph{legitimate}$ if at any position, it maps the position to an admissible next position as given by Table 1. A legitimate strategy is \emph{winning} for $\exists$ from a position in the game, if it guarantees $\exists$ to win any match starting from that position, no matter how $\forall$ plays. A position starting from which $\exists$ has a winning strategy is called a \emph{winning position} for $\exists$ . The set of all winning positions for $\exists$ in $\mathcal{G}(\mathbb{S},\mathbb{A})$ is denoted by $\mbox{Win}_{\exists}(\mathbb{S},\mathbb{A})$ or shortly by $\mbox{Win}_{\exists}$.
A history-free strategy $(\Phi, Z)$ initialized at $(s_I, b)\in S\times A$ is called $\emph{scattered}$ if the relation $$\{(s_I, b)\}\cup\bigcup\{Z_{s,a}\subseteq S\times{A}\mid(s,a)\in\mbox{Win}_{\exists}\}$$ is functional. Finally we say that initialized $\mathsf{T}$-automaton $(\mathbb A,a_I)$ $\emph{accepts}$ $(\mathbb{S},s_I)$ if $\exists$ has a winning strategy in the game $\mathcal{G}(\mathbb {A}, \mathbb{S})$ initialized at position $(s_I, a_I)$. If $\exists$ has a scattered winning strategy starting from $(s_I, a_I)$, we will say $(\mathbb A,a_I)$ \emph{strongly accepts} $(\mathbb{S},s_I)$.
\end{dfn}
\begin{dfn}
For every initialized $\mathsf{T}$-automaton $(\mathbb{A},a_I)$ over some color set $C$, $L(\mathbb{A}, a_I)$, the \emph{recognizable language} of $(\mathbb{A},a_I)$, is the class of all pointed $C$-colored $\mathsf{T}$-coalgebras that are accepted by $(\mathbb{A},a_I)$. We call two initialized $\mathsf{T}$-automata $(\mathbb{A},a_I)$ and $(\mathbb{A'},a'_I)$ over set $C$ \emph{equivalent} iff $L(\mathbb{A},a_I)=L(\mathbb{A'},a'_I)$.
\end{dfn}
\subsection{Logic vs. Automata}
\begin{prop}\label{Prop4}
There exists an effective procedure to transform a formula $b\in\mu \mathcal{L}_L^{\textsf{T}}(\textsf{P})$ to an initialized $\textsf{T}$-automaton $(\mathbb{A}_b,a_b)$ over the set $C=\mathcal{P}(\textsf{P})$ such that for every $C$-colored $\textsf{T}$-coalgebra $(\mathbb{S},s)$:
$$(\mathbb{S},s)\Vdash_{\mathbb{S}} b \text{ iff }(\mathbb{A}_b,a_b) \text{ accepts } (\mathbb{S},s).$$
\end{prop}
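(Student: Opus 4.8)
The plan is to argue by induction on the complexity of the (guarded) formula $b$, building $(\mathbb{A}_b,a_b)$ compositionally and, in each case, verifying the equivalence $(\mathbb{S},s)\Vdash_{\mathbb{S}} b$ iff $(\mathbb{A}_b,a_b)$ accepts $(\mathbb{S},s)$ by relating the acceptance game $\mathcal{G}(\mathbb{S},\mathbb{A}_b)$ to the evaluation game of $b$. This is in essence the translation of \cite{yv}, which we adapt to the present setting, where $\mathsf{T}$ carries only a quasi-functorial, diagonal-preserving lax extension $L$ rather than a weak-pullback-preserving structure. Note that a $\mathcal{P}(\mathsf{P})$-colored $\mathsf{T}$-coalgebra is precisely a $\mathsf{T}$-model, so the color an automaton reads at a state $s$ is the set of propositional letters true at $s$, which is what lets a one-step transition ``inspect'' the valuation.

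For the base case $b=p$ one uses a small automaton that, from its initial state, routes a pointed coalgebra to an accepting sink or to a rejecting sink according to whether $p$ lies in the color at that state; that $\exists$ can always reach the sink she wants follows from preservation of fullness (Proposition~\ref{diag}(1)). For $b=\bigvee A$ with $A=\{c_1,\dots,c_n\}$, take the disjoint union of the automata $(\mathbb{A}_{c_i},a_{c_i})$ given by the induction hypothesis, add a fresh initial state $a_b$ of even priority, and set $\Delta(a_b,c)=\bigcup_i\mathsf{T}\iota_i[\Delta_{c_i}(a_{c_i},c)]$ with $\iota_i$ the $i$-th coproduct injection; picking an element of this set commits $\exists$ to a disjunct, and $\bigvee\emptyset$ yields an automaton with no legal move. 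For the modal case $b=\nabla\alpha$ one uses that $\alpha\in\mathsf{T}(Base(\alpha))$ with $Base(\alpha)=\{c_1,\dots,c_k\}$ finite (Definition~\ref{base}): form the disjoint union of the $(\mathbb{A}_{c_i},a_{c_i})$, add a fresh initial state $a_b$, and put $\Delta(a_b,c)=\{\mathsf{T}g(\alpha)\}$ where $g\colon Base(\alpha)\to A$ sends $c_i$ to the copy of $a_{c_i}$. Checking that the one-step acceptance condition at $a_b$ then coincides with the $\nabla$-clause $(\sigma(s),\alpha)\in L(\Vdash_{\mathbb{S}}\mid_{S\times Base(\alpha)})$ of Definition~\ref{semantics} is where the lax-extension axioms (L1)--(L3), the identity $\mathsf{T}g=Lg$ of Proposition~\ref{diag}(3), and quasi-functoriality are used; this is the computation already underlying the semantics (cf.\ \cite{yv,jm:yv}).

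The fixpoint case $b=\mu p.a$ exploits guardedness: by the induction hypothesis we have $(\mathbb{A}_a,a_a)$ over $\mathcal{P}(\mathsf{P})$, and since every occurrence of $p$ in $a$ lies under a $\nabla$, the atomic subautomata realizing those occurrences inside $\mathbb{A}_a$ are reached only along one-step ($\nabla$-)edges. We form $\mathbb{A}_{\mu p.a}$ by deleting those subautomata, rerouting every transition into one of their initial states to point to $a_a$ instead, keeping $a_a$ initial, and assigning to $a_a$ the least odd priority strictly above all priorities in $\mathbb{A}_a$: the odd dominating priority on the loops through $a_a$ makes infinitely many regenerations of $p$ losing for $\exists$, which is the game-theoretic reading of a least fixpoint, and correctness follows by comparison with the evaluation game of $\mu p.a$ and the characterization of $\llbracket\mu p.a\rrbracket_{\mathbb{S}}$ as a least prefixed point. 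Finally, negation $b=\neg a$ needs closure of $\mathsf{T}$-automata under complementation: one dualizes the acceptance game of $(\mathbb{A}_a,a_a)$, swapping $\exists$ with $\forall$ and replacing $\Omega$ by $\Omega+1$, and then converts the resulting universal-branching automaton back into a non-deterministic $\mathsf{T}$-automaton in the sense of Definition~\ref{automata}; positional determinacy of parity games gives that ``$\exists$ does not win $\mathcal{G}(\mathbb{S},\mathbb{A}_a)$'' is the same as ``$\exists$ wins the dual game'', i.e.\ as $(\mathbb{S},s)\not\Vdash_{\mathbb{S}} a$.

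I expect the main obstacle to be this last point, and specifically the de-alternation it demands: dualizing the game is immediate, but re-expressing the resulting universal-branching transition structure in the form $\Delta\colon A\times C\to\mathcal{P}(\mathsf{T}A)$ required by Definition~\ref{automata} is the delicate construction, and it is there that one must check that passing from weak-pullback preservation to an arbitrary quasi-functorial lax extension $L$ causes no harm. The fixpoint case is a close second: one has to control how the priority shift encoding $\mu$ interacts with the priorities already present and with the rerouting, and the cleanest way to do so is to route the whole correctness argument through evaluation games rather than a direct analysis of strategies.
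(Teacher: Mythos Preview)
The paper does not actually give a proof of this proposition: it is stated without argument, as a result imported from the existing coalgebraic automata literature (in particular \cite{yv} and \cite{ck:yv}), and the same is true of the companion Proposition~\ref{prop5}. So there is no ``paper's own proof'' to compare against; the authors treat the formula--automaton correspondence as background.

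Your sketch is the standard inductive construction from \cite{yv}, and as such it is what the paper is implicitly appealing to. You correctly single out the negation case as the real work: dualizing the game is free by positional determinacy, but turning the resulting alternating automaton back into a non-deterministic one in the sense of Definition~\ref{automata} is the de-alternation step, and that is precisely the place where the original argument in \cite{ck:yv} uses weak-pullback preservation (via the Barr/relation-lifting characterization). Whether this step goes through unchanged for an arbitrary quasi-functorial, diagonal-preserving lax extension is a genuine question that the present paper does not address; it simply assumes the correspondence. If you want to make your write-up self-contained, this is the gap you would need to close --- either by checking that the simulation/de-alternation construction of \cite{ck:yv} only uses properties available from (L1)--(L4) and quasi-functoriality, or by restricting the language to its negation-normal-form fragment (with the dual $\Delta$-modality if needed) so that complementation is pushed to the atomic level. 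The remaining cases in your sketch (literals, $\bigvee$, $\nabla$, $\mu$) are unproblematic and match the standard treatment.
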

Conversley, there is an effective procedure to construct a $\mu \mathcal{L}_L^{\textsf{T}}(\textsf{P})$-formula $a_{\mathbb{A}}$ for a given initialized $\textsf{T}$-automaton $(\mathbb{A},a_I)$ such that $a_{\mathbb{A}}$ holds precisely at those pointed $\textsf{T}$-coalgebras that are accepted by $(\mathbb{A},a_I)$. This fact is given by the following proposition:
\begin{prop}\label{prop5}
There exists an effective procedure transforming an initialize $\textsf{T}$-automaton $(\mathbb{A}, a_I)$ to an equivalent $\mu \mathcal{L}_L^{\textsf{T}}(\textsf{P})$-formula $a_{\mathbb{A}}$.
\end{prop}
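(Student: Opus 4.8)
The plan is to build the formula $a_{\mathbb{A}}$ by a standard simulation of the acceptance game inside the fixpoint logic, reading off one propositional variable per automaton state and using the parity map to determine the nesting of least and greatest fixpoint operators. First I would fix an initialized $\mathsf{T}$-automaton $(\mathbb{A},a_I)$ with $\mathbb{A}=(A,\Delta,\Omega)$ over $C=\mathcal{P}(\mathsf{P})$, and for each state $a\in A$ introduce a fresh propositional variable $x_a$; the intended meaning is that $x_a$ holds at a point $s$ of a $C$-colored coalgebra exactly when $\exists$ wins $\mathcal{G}(\mathbb{S},\mathbb{A})$ from $(s,a)$. The key observation is that a single round of the acceptance game from $(s,a)$ — pick $\phi\in\Delta(a,\gamma(s))$, then verify $(\sigma(s),\phi)\in LZ$ for some $Z\subseteq S\times A$ with all $(t,b)\in Z$ again winning — is precisely what the $\nabla$ modality expresses once we replace each target state $b$ by $x_b$. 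Concretely, one associates to each $a\in A$ and each color $c=\mathcal{P}_c\subseteq\mathsf{P}$ the disjunction $\bigvee_{\phi\in\Delta(a,c)}\nabla(\mathsf{T}\xi)(\phi)$, where $\xi:A\to\mu\mathcal{L}_L^{\mathsf{T}}(\mathsf{P}\cup\{x_b:b\in A\})$ sends $b$ to $x_b$; this uses that $\Delta(a,c)\subseteq\mathsf{T}A$ and that $\mathsf{T}_\omega$ applies to finite carriers, so the images $\nabla(\mathsf{T}\xi)(\phi)$ are legitimate formulas. To encode the color $c$ one conjoins a characteristic propositional formula $\chi_c := \bigwedge_{p\in c} p \wedge \bigwedge_{p\in\mathsf{P}\setminus c}\neg p$ (restricting to the finitely many letters actually mentioned), so the ``one-step'' formula for $a$ is $\mathit{step}_a := \bigvee_{c}\big(\chi_c\wedge\bigvee_{\phi\in\Delta(a,c)}\nabla(\mathsf{T}\xi)(\phi)\big)$.

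Next I would turn the system of equations $x_a \equiv \mathit{step}_a$ into a single closed formula by iterated fixpoint abstraction, ordered according to the priorities $\Omega$. Enumerate the states $a_1,\dots,a_n$ of $A$ so that $\Omega(a_1)\le\cdots\le\Omega(a_n)$; working from the outside in, bind $x_{a_n}$ with $\mu$ if $\Omega(a_n)$ is odd and with $\nu$ (that is, $\neg\mu x_{a_n}.\neg(\cdots)$, since the grammar officially has only $\mu$ and $\neg$) if $\Omega(a_n)$ is even, then continue with $x_{a_{n-1}}$, and so on, finally substituting the solved value of $x_{a_I}$. This is the well-known translation from parity automata to the (coalgebraic) $\mu$-calculus, and its correctness rests on the Knaster--Tarski / game-theoretic characterization of alternating fixpoints: unfolding the nested fixpoints along a branch of the evaluation produces exactly the parity winning condition of $\mathcal{G}(\mathbb{S},\mathbb{A})$. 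I would also note that the resulting formula can be made guarded, since every occurrence of an $x_b$ in $\mathit{step}_a$ sits underneath a $\nabla$; this matches the standing guardedness convention. Effectiveness is immediate because $A$, each $\Delta(a,c)$, and $Base(\phi)$ are all finite and the construction is a finite recursion.

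The core of the argument is then the equivalence claim: for every $C$-colored $\mathsf{T}$-coalgebra $(\mathbb{S},s)$, $s\Vdash_{\mathbb{S}} a_{\mathbb{A}}$ iff $(\mathbb{A},a_I)$ accepts $(\mathbb{S},s)$. I would prove this by relating a winning strategy for $\exists$ in the acceptance game to a winning strategy for $\exists$ in the evaluation game (unfolding game) of the fixpoint formula $a_{\mathbb{A}}$ on $\mathbb{S}$, and conversely. In one direction, a positional winning strategy $(\Phi,Z)$ for $\exists$ in $\mathcal{G}(\mathbb{S},\mathbb{A})$ tells her, at a position coding ``evaluate $x_a$ at $s$'', which disjunct $\chi_c\wedge\nabla(\mathsf{T}\xi)(\Phi(s,a))$ to choose (namely $c=\gamma(s)$) and then, at the $\nabla$-position, which relation $Z_{s,a}\subseteq S\times A$ witnesses $(\sigma(s),\mathsf{T}\xi(\Phi(s,a)))\in L(\Vdash)$ — this is where one invokes the semantics of $\nabla$ via $L$ together with monotonicity (L1) and the fact that $L(\Vdash\!\restriction_{S\times Base})$ agrees with $L(\Vdash)$ on the relevant elements. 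Infinite matches in the two games then visit the same sequence of automaton states, so the parity condition $\Omega$ on the automaton side corresponds exactly to the alternation-depth parity induced by the $\mu$/$\nu$ prefixing on the formula side, and $\exists$ wins one iff she wins the other. The converse direction is symmetric.

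The main obstacle I anticipate is bookkeeping around the $\nabla$ clause rather than anything conceptual: one must check that applying $\mathsf{T}\xi$ to $\phi\in\mathsf{T}A$ yields an element of $\mathsf{T}_\omega$ of the (infinite) formula set with the expected $Base$, and that the official non-recursive reading of the $\nabla$-semantics through $L(\Vdash\!\restriction_{S\times Base(\alpha)})$ does not interfere with the strategy transfer — here the cited Proposition~6 of \cite{jm:yv} and the lax-extension axioms do the work. A secondary point is that the coalgebraic $\mu$-calculus has negation but not a primitive $\nu$, so the even-priority bindings must be handled via $\neg\mu\neg$, which requires tracking the parity of negations; restricting attention to guarded formulas and putting $a_{\mathbb{A}}$ in a negation normal form keeps this manageable. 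Everything else — finiteness, effectiveness, guardedness — is routine once the game correspondence is set up.
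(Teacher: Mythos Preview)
The paper does not actually prove Proposition~\ref{prop5}: it is stated without proof at the end of Section~3, treated (together with Proposition~\ref{Prop4}) as a known equivalence between coalgebraic automata and the coalgebraic $\mu$-calculus, implicitly deferring to \cite{yv}. So there is no ``paper's own proof'' to compare against here.

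Your proposal, by contrast, actually carries out the standard translation: one variable $x_a$ per automaton state, a one-step formula $\mathit{step}_a$ built from $\chi_c$ and $\nabla(\mathsf{T}\xi)(\phi)$, and a parity-ordered nesting of $\mu/\nu$ binders, with correctness argued via a bijection between winning strategies in the acceptance game and in the evaluation (unfolding) game of the resulting formula. This is exactly the construction one finds in \cite{yv} (and, for Kripke semantics, in \cite{dj:iw}), so your approach is the expected one and is essentially correct. Two small points worth tightening: first, the priority-ordered sequential binding $x_{a_n},x_{a_{n-1}},\dots$ you describe is really the Beki\'{c}-style unrolling of a block (vectorial) fixpoint with one block per priority level --- stating it that way makes the parity-correctness argument cleaner, since what matters is that variables of the same priority sit in the same $\mu$- or $\nu$-block, not their linear order within it. Second, your disjunction over colors $c\in\mathcal{P}(\mathsf{P})$ is only a finite formula if $\mathsf{P}$ is taken finite (or if $\Delta$ factors through a finite quotient of $\mathcal{P}(\mathsf{P})$); you flag this with ``the finitely many letters actually mentioned'', but it deserves an explicit hypothesis since the paper never assumes $\mathsf{P}$ finite. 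Neither of these is a genuine gap --- your sketch would go through once these are made precise.
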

\section{Automata are Closed under Projection}

This section is devoted to proof of the main technical result of our paper i.e.; closure under projection.

\begin{dfn} Let $\mathbb{A}=(A, \Delta, \Omega)$ be a $\textsf{T}$-automaton over color set $C$. We call a state $a\in A$ a \emph{true state} of $\mathbb{A}$ if $\Omega(a)$ is even and $\Delta(a,c)=\textsf{T}(\{a\})$. We will standardly use the notation $a_{\top}$ to refer to a true state.
Given $(a,c)\in A\times C$ we call $\phi\in\Delta(a,c)$ a \emph{satisfiable element} of $\mathbb{A}$ if there is a witnessing $\textsf{T}$-coalgebra $(\mathbb{Q}_{\phi},\rho, \gamma_Q)$, $\tau\in\mathsf{T}Q$ and a relation $Z_\phi:Q\rel A$ such that $(\tau, \phi)\in LZ_\phi$ and $Z_{\phi}\subseteq\mbox{Win}_{\exists}(\mathbb{Q},\mathbb{A})$. Finally we call a $\mathsf{T}$-automaton $\mathbb{A}$  \emph{totally satisfiable} whenever for all $(a,c)\in A\times C$ and $\phi\in\Delta(a,c)$, $\phi$ is satisfiable.
\end{dfn}

The following proposition states that without loss of generality we can always assume that an initialized $\mathsf{T}$-automaton $(\mathbb{A},a_I)$  is totally satisfiable and has a true state. Furthermore, we may always assume that there exists a witnessing $\mathsf{T}$-coalgebra $\mathbb{Q}$ that works for all $(a,c)\in A\times C$ and $\phi\in \Delta(a,c)$. 
\begin{prop}\label{factaut}
For any initialized $\mathsf{T}$-automaton $(\mathbb{A},a_I)$ over set color $C$ we have that:
\begin{itemize}
\item[(1)]There is an equivalent initialized $\mathsf{T}$-automaton $(\mathbb{A'},a_I)$ such that $\mathbb{A}'$ has a true state.
\item[(2)]There exists a totally satisfiable initialized $\mathsf{T}$-automaton $(\mathbb{A'},a'_I)$ which is equivalent to $(\mathbb{A},a_I)$.
\item[(3)]For every totally satisfiable automaton $(\mathbb{A},a_I)$ there is a $C$-colored witnessing coalgebra $\mathbb{Q}=(Q,\rho,\gamma_{Q})$ and a relation $Y:Q\rel A$ such that for all $(a,c)\in A\times C$ and $\phi\in \Delta (a,c)$, there is a $\tau\in \mathsf{T}Q$ such that $(\tau, \phi)\in LY$ and $Y\subseteq\mbox{Win}_{\exists}(\mathbb{Q},\mathbb{A})$.
\end{itemize}
\end{prop}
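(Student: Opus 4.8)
The plan is to establish the three items one at a time: (1) and (2) by modifying the automaton, (3) by amalgamating the witnessing coalgebras. Two background facts are used throughout: parity games are positionally determined, so every winning strategy for $\exists$ may be taken history-free, and such a strategy visits only positions that are themselves winning for $\exists$; and Proposition~\ref{diag}(3), i.e.\ $Lf=\mathsf{T}f$ for every function $f$ (identified with its graph). From these I first record a \emph{restriction lemma}: if $B\subseteq A$, $\phi\in\mathsf{T}B$, $s\in\mathsf{T}S$ and $(s,\phi)\in LZ$ for some $Z:S\rel A$, then already $(s,\phi)\in L(Z\cap(S\times B))$. Indeed, with $i:B\hookrightarrow A$ the inclusion, $Z\cap(S\times B)$ equals $Z;i^{\circ}$ as a relation $S\rel B$; since $Li=\mathsf{T}i$ is the graph of the inclusion $\mathsf{T}B\hookrightarrow\mathsf{T}A$ and $\phi\in\mathsf{T}B$ we have $(\phi,\phi)\in(Li)^{\circ}=L(i^{\circ})$, so $(s,\phi)\in LZ;L(i^{\circ})\subseteq L(Z;i^{\circ})$ by (L2).

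For item (1), pick a fresh state $a_{\top}$ and let $\mathbb{A}'=(A\cup\{a_{\top}\},\Delta',\Omega')$, where $\Delta'$ agrees with $\Delta$ on $A$ (note $\Delta(a,c)\subseteq\mathsf{T}A\subseteq\mathsf{T}(A\cup\{a_{\top}\})$ since $\mathsf{T}$ preserves inclusions) and $\Delta'(a_{\top},c):=\mathsf{T}(\{a_{\top}\})$, while $\Omega'$ agrees with $\Omega$ on $A$ and $\Omega'(a_{\top}):=0$. Then $a_{\top}$ is a true state and $L(\mathbb{A},a_I)\subseteq L(\mathbb{A}',a_I)$ is clear. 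For the converse, take a history-free winning strategy for $\exists$ in $\mathcal{G}(\mathbb{S},\mathbb{A}')$ from $(s,a_I)$ and modify it so that whenever it moves from a basic position $(t,a)$ with $a\in A$ to $(\sigma(t),\phi)$ — forcing $\phi\in\Delta(a,\gamma(t))\subseteq\mathsf{T}A$ — and then to a relation $Z$, it instead plays $Z\cap(S\times A)$, which is admissible by the restriction lemma. Since $a_I\in A$, every match consistent with the modified strategy stays within the positions of $\mathcal{G}(\mathbb{S},\mathbb{A})$, hence is a match there; and re-enlarging the relation-moves back to the original $Z$'s (which does not affect a history-free strategy) exhibits it as a match consistent with the original winning strategy, so $\exists$ wins. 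Thus $(\mathbb{A}',a_I)$ and $(\mathbb{A},a_I)$ are equivalent. The same construction applied to the automaton furnished by (2) yields one that is simultaneously totally satisfiable and has a true state, the new transitions $\phi\in\mathsf{T}(\{a_{\top}\})$ being satisfiable via the one-state coalgebra on $\{a_{\top}\}$ with coalgebra map $a_{\top}\mapsto\phi$ and relation $\{(a_{\top},a_{\top})\}$ (here $(\phi,\phi)\in L\{(a_{\top},a_{\top})\}$ by (L3), and the loop at $a_{\top}$ is won since $\Omega'(a_{\top})=0$).

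For item (2), let $\mathbb{A}':=(A,\Delta',\Omega)$ with $\Delta'(a,c):=\{\phi\in\Delta(a,c)\mid\phi\text{ is satisfiable in }\mathbb{A}\}$. I claim $\mbox{Win}_{\exists}(\mathbb{S},\mathbb{A})=\mbox{Win}_{\exists}(\mathbb{S},\mathbb{A}')$ for every $\mathsf{T}$-coalgebra $\mathbb{S}$; the inclusion $\supseteq$ is immediate from $\Delta'\subseteq\Delta$. For $\subseteq$, take a history-free winning strategy for $\exists$ in $\mathcal{G}(\mathbb{S},\mathbb{A})$: at a winning basic position $(t,a)$ it moves to some $(\sigma(t),\phi)$ and thence — again from a winning position — to a relation $Z$ with $(\sigma(t),\phi)\in LZ$ and $Z\subseteq\mbox{Win}_{\exists}(\mathbb{S},\mathbb{A})$; taking $\mathbb{Q}:=\mathbb{S}$, $\tau:=\sigma(t)$, $Z_{\phi}:=Z$ witnesses that $\phi$ is satisfiable in $\mathbb{A}$, so $\phi\in\Delta'(a,\gamma(t))$. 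Hence this very strategy is legitimate and winning in $\mathcal{G}(\mathbb{S},\mathbb{A}')$ from every winning position, proving $\subseteq$. Equality of the winning sets gives $L(\mathbb{A}',a_I)=L(\mathbb{A},a_I)$, and it also makes ``satisfiable in $\mathbb{A}$'' and ``satisfiable in $\mathbb{A}'$'' coincide; since every element of $\Delta'$ is satisfiable in $\mathbb{A}$ by construction, $\mathbb{A}'$ is totally satisfiable.

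For item (3), let $I$ be the set of triples $(a,c,\phi)$ with $\phi\in\Delta(a,c)$, and for each $i=(a,c,\phi)\in I$ fix a witness $\mathbb{Q}_i=(Q_i,\rho_i,\gamma_i)$, $\tau_i\in\mathsf{T}Q_i$, $Z_i:Q_i\rel A$ as in the definition of total satisfiability. Let $\mathbb{Q}:=\coprod_{i\in I}\mathbb{Q}_i$ be the coproduct of $C$-colored $\mathsf{T}$-coalgebras, with coalgebra injections $\iota_i:\mathbb{Q}_i\to\mathbb{Q}$, so that $\rho\circ\iota_i=\mathsf{T}\iota_i\circ\rho_i$ and $\gamma\circ\iota_i=\gamma_i$, and set $Y:=\bigcup_{i\in I}\{(\iota_i(q),a)\mid(q,a)\in Z_i\}:Q\rel A$ and $\tau_i':=\mathsf{T}\iota_i(\tau_i)$ for $i=(a,c,\phi)$. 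Since $L\iota_i=\mathsf{T}\iota_i$ we get $(\tau_i',\tau_i)\in L(\iota_i^{\circ})$, hence $(\tau_i',\phi)\in L(\iota_i^{\circ});LZ_i\subseteq L(\iota_i^{\circ};Z_i)\subseteq LY$ by (L2) and (L1). It remains to see $Y\subseteq\mbox{Win}_{\exists}(\mathbb{Q},\mathbb{A})$, i.e.\ that $\iota_i$ carries $\mbox{Win}_{\exists}(\mathbb{Q}_i,\mathbb{A})$ into $\mbox{Win}_{\exists}(\mathbb{Q},\mathbb{A})$: this is the invariance of the acceptance game under the coalgebra morphism $\iota_i$, obtained by pushing a history-free winning strategy for $\exists$ from $(q,a)$ in $\mathcal{G}(\mathbb{Q}_i,\mathbb{A})$ forward along $\iota_i$ — the move to $(\sigma,\phi)$ uses $\rho\circ\iota_i=\mathsf{T}\iota_i\circ\rho_i$ and $\gamma\circ\iota_i=\gamma_i$, and the relation-moves use the same $L\iota_i=\mathsf{T}\iota_i$ plus (L2) computation as above — noting that the resulting match and the original one run over identical basic-position sequences, hence identical priorities. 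The crux of the whole argument is the restriction lemma and its use in (1), together with the morphism-invariance in (3): both amount to ``$\exists$ may always trim a witnessing relation down to the relevant states without losing'', and both rest squarely on diagonal-preservation of $L$ (Proposition~\ref{diag}(3)) and axiom (L2); without a sufficiently well-behaved lax extension these steps break down.
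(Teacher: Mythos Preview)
Your proof is correct and follows the same approach as the paper: the constructions for all three items---adjoining a fresh true state in (1), deleting unsatisfiable transition elements in (2), and taking the coproduct of the witnessing coalgebras in (3)---are exactly the paper's, and your added detail (the restriction lemma, the explicit argument that a winning strategy never selects an unsatisfiable $\phi$, and the morphism-invariance computation) merely fills in what the paper leaves to the reader.
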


\begin{proof}
\begin{itemize}
\item[(1)] Define $(\mathbb{A}', a'_I):=(A\cup\{a_{\top}\}, \Delta', \Omega', a'_I)$ such that $a'_I=a_I$ and for all $a\in A$, $\Delta(a)=\Delta'(a)$ and $\Omega(a)=\Omega'(a)$. For $(a_{\top},c)\in A\times C$ define $\Delta'(a_{\top},c):=\mathsf{T}(\{a_{\top}\})$ and $\Omega' (a_{\top}):=0$. Since it is not difficult to check the equivalence of these automata, we leave it for the reader.

\item[(2)] We will define $(\mathbb{A}',a'_I)$ over $C$ by just removing the unsatisfiable elements of any $\Delta(a,c)$:
$$(\mathbb{A}',a'_I)=(A,\Delta',\Omega,a_I),$$
where $\Delta'(a,c)=\{\phi\in \Delta(a,c)\mid \phi ~\text{is a satisfiable element}\}$. 

$(\mathbb{A}', a_I)$ and $(\mathbb{A}, a_I)$ are equivalent since $\exists$ will never go through unsatisfiable elements in winning plays.

\item[(3)] Take the coproduct of all witnessing coalgebra $\mathbb{Q}_\phi$ for all $\phi\in\Delta(a,c)$ for every $(a,c)\in A\times C$. The relation $Y$ is the union of all $Y_\phi$.
\end{itemize}
\end{proof}

Now we will state the main technical result of this paper. Theorem~\ref{thm1} is a generalization of ~\cite[Proposition 5.9]{ck:yv}, where the same result is proved for the weak-pullback preserving functors. In the following theorem we will generalize the proposition to the class of all functors with a quasi-functorial lax extension that preserves diagonals.

\begin{thm}[\textbf{Closure under projection}]\label{thm1}
 Given an initialized $\mathsf{T}$-automaton $(\mathbb{A},a_I)$ over a color set $\mathcal{P}(\mathsf{P})$ and an element $p\in\mathsf{P}$, then there exists an initialized $\mathsf{T}$-automaton $(\exists_{p}.\mathbb{A},a)$ over color set $\mathcal{P}(\mathsf{P}\setminus\{p\})$ such that:
\begin{eqnarray}\label{L}
(\mathbb{S},s_I)\in L(\exists_{p}.\mathbb{A},a))~\text{iff}~(\mathbb{\overline{S}},\overline{s}_I)\in L(\mathbb{A},a)~\text{for some}~(\mathbb{\overline{S}},\overline{s}_I)~\text{with}~\mathbb{S},s_I\leftrightarroweq_{p}^{L}\mathbb{\overline{S}},\overline{s}_I.
\end{eqnarray}
\end{thm}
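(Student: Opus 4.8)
The aim is to construct, from the automaton $(\mathbb{A},a_I)$ over $\mathcal{P}(\mathsf{P})$, a projected automaton $(\exists_p.\mathbb{A},a)$ over $\mathcal{P}(\mathsf{P}\setminus\{p\})$ witnessing (\ref{L}). The natural choice for the state set is to keep $A$ (possibly after the normalisations of Proposition~\ref{factaut}), keep the parity map $\Omega$, and define the new transition function by "forgetting" the $p$-component of the color: roughly, for $(a,c')\in A\times\mathcal{P}(\mathsf{P}\setminus\{p\})$ put
$$(\exists_p.\Delta)(a,c') := \Delta(a,c'\cup\{p\})\cup\Delta(a,c'),$$
so that $\exists$, when playing on a coalgebra whose coloring ignores $p$, is free to "guess" at each state whether $p$ holds there or not. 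By Proposition~\ref{factaut} we may and do assume $(\mathbb{A},a_I)$ is totally satisfiable and has a true state $a_\top$; I also plan to take the uniform witnessing coalgebra $\mathbb{Q}=(Q,\rho,\gamma_Q)$ with relation $Y:Q\rel A$ from item (3), since it will be needed to realise the guessed $p$-values concretely.

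\textbf{Right-to-left direction.} Suppose $(\overline{\mathbb{S}},\overline{s}_I)\in L(\mathbb{A},a_I)$ and $\mathbb{S},s_I\leftrightarroweq_p^L\overline{\mathbb{S}},\overline{s}_I$. Let $B$ be an up-to-$p$ $L_{\mathsf{P}}$-bisimulation linking $s_I$ to $\overline{s}_I$, and let $(\Phi,Z)$ be $\exists$'s winning strategy in $\mathcal{G}(\overline{\mathbb{S}},\mathbb{A})$ from $(\overline{s}_I,a_I)$. I would compose: in $\mathcal{G}(\mathbb{S},\exists_p.\mathbb{A})$ from $(s_I,a_I)$, whenever a basic position $(s,a)$ is reached together with a bookkeeping witness $\overline{s}$ such that $(s,\overline{s})\in B$ and $(\overline{s},a)$ is winning in the old game, $\exists$ consults the old strategy: it knows $\overline{s}\in\overline{V}(p)$ or not, picks the corresponding $c=\gamma(s)$ or $c\cup\{p\}$ (which lies in $\exists_p.\Delta(a,\gamma_{\mathsf{P}\setminus\{p\}}(s))$ because $\gamma_{\mathsf{P}\setminus\{p\}}(s)=\overline{\gamma}(s)\cap(\mathsf{P}\setminus\{p\})$ agrees via $B$ with $\overline{\gamma}(\overline{s})\cap(\mathsf{P}\setminus\{p\})$), plays $\Phi(\overline{s},a)$, and then for the relational move uses the composite relation $B;Z_{\overline{s},a}$, which is an admissible choice because $\sigma(s)\mathrel{L(B;Z_{\overline{s},a})}\Phi(\overline{s},a)$ by (L2) from $\sigma(s)\,LB\,\overline\sigma(\overline{s})$ and $\overline\sigma(\overline{s})\,LZ_{\overline{s},a}\,\Phi(\overline{s},a)$. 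Matches project onto old matches with the same priorities, so $\exists$ wins.

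\textbf{Left-to-right direction and the main obstacle.} Suppose $(\mathbb{S},s_I)\in L(\exists_p.\mathbb{A},a_I)$ with winning strategy $(\Phi,Z)$. Now I must build a model $\overline{\mathbb{S}}$ over all of $\mathsf{P}$, an up-to-$p$ bisimilar copy of $\mathbb{S}$, on which $\mathbb{A}$ accepts. The idea is that $\exists$'s strategy in the projected game effectively records, at each visited pair $(s,a)$, a choice of $p$-color $c$; the trouble is that the \emph{same} state $s$ of $\mathbb{S}$ may be visited with \emph{different} automaton states $a$ demanding \emph{incompatible} $p$-values, so one cannot simply re-decorate $\mathbb{S}$ itself. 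The standard fix, mirroring \cite{ck:yv}, is to pass to a "disjoint-union-with-choices" coalgebra: take $\overline{\mathbb{S}}$ to have carrier (a subset of) $\mathrm{Win}_\exists(\mathbb{S},\exists_p.\mathbb{A})\subseteq S\times A$, define its $p$-coloring at $(s,a)$ by $\exists$'s recorded choice, and define its coalgebra map using the relational move $Z_{s,a}$ together with a point in $\mathsf{T}$ of the new carrier mapping into it; making this map well-typed is exactly where I expect to spend effort, and where \emph{quasi-functoriality} plus \emph{diagonal preservation} of $L$ enter — they are what let us, via Proposition~\ref{diag}(3) ($\mathsf{T}f=Lf$) and the fullness statements Proposition~\ref{diag}(1),(2), convert the abstract relational witness $(\sigma(s),\Phi(s,a))\in LZ_{s,a}$ into an honest $\mathsf{T}$-coalgebra structure and an honest coalgebra morphism back to $\mathbb{S}$, which will serve as the up-to-$p$ bisimulation (its graph is an $L$-bisimulation by Proposition~3.? on morphisms, and it preserves every letter except $p$ by construction). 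One then checks the projection of $\overline{\mathbb{S}}$ to $\mathsf{P}\setminus\{p\}$ is (bisimilar to) $\mathbb{S}$ and that the relation $\{((s,a),a)\}$ witnesses acceptance of $\overline{\mathbb{S}}$ by $\mathbb{A}$ from $(s_I,a_I)$, with priorities inherited so the parity condition transfers. The genuinely hard step is this construction of $\overline{\mathbb{S}}$ and the verification that the coalgebra map is well-defined and yields a coalgebra morphism onto $\mathbb{S}$; here the totally-satisfiable/true-state normalisation and the uniform witnessing coalgebra $\mathbb{Q}$ are used to fill in the behaviour of states that $\exists$'s strategy never forces to be inspected, and the weak-pullback-preservation used in \cite{ck:yv} must everywhere be replaced by the quasi-functorial identities of Proposition~\ref{diag}.
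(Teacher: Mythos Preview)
Your plan follows essentially the same route as the paper, including the definition $\Delta_p(a,c')=\Delta(a,c')\cup\Delta(a,c'\cup\{p\})$. Two corrections to the left-to-right sketch are needed.

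First, the carrier of $\overline{\mathbb{S}}$ cannot be merely a subset of $S\times A$; the paper takes $\overline{S}:=(S\times A)\uplus Q$. The reason is precisely the quasi-functorial split you anticipate: to obtain $\overline{\sigma}(s,a)$ one extends $Z_{s,a}$ to $Z'_{s,a}:=Z_{s,a}\cup\{(t,a_\top)\mid t\notin\mathrm{Dom}(Z_{s,a})\}$ (so it is full on $S$ and $\sigma(s)\in\mathrm{Dom}(L(\pi_1^\circ;i))$ by Proposition~\ref{diag}(2)), but to secure $\phi_{s,a}\in\mathrm{Rng}(L(\pi_2\cup Y))$ one must adjoin the witnessing relation $Y:Q\rel A$ of Proposition~\ref{factaut}(3). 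Quasi-functoriality then yields $\overline{\sigma}(s,a)\in\mathsf{T}(Z'_{s,a}\uplus Q)$ with $(\sigma(s),\overline{\sigma}(s,a))\in L\pi_1^\circ$ and $(\overline{\sigma}(s,a),\phi_{s,a})\in L(\pi_2\cup Y)$; this forces $Q$ into the carrier, not just the background.

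Second, with $Q$ present the projection $\pi_S:\overline{S}\to S$ is only partial, so it is not a coalgebra morphism; the paper verifies directly that its graph is an up-to-$p$ $L$-bisimulation (using $L\kappa_a^\circ\subseteq L\pi_S$ on non-winning pairs and $L\pi_1\subseteq L\pi_S$ on winning ones). The acceptance argument --- that $((s,a),a)\mapsto(\phi_{s,a},\pi_2\cup Y)$ is winning for $\exists$ in $\mathcal{G}(\overline{\mathbb{S}},\mathbb{A})$, with matches either staying in the $\pi_2$-part (shadowing a $(\Phi,Z)$-conform match), passing into $Y$ (where $\exists$ switches to her $\mathbb{Q}$-strategy), or hitting $a_\top$ --- is as you outline.

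Your right-to-left argument, composing the bisimulation $B$ with $\exists$'s old strategy via $B;Z_{\overline{s},a}$ and (L2), is correct and in fact more explicit than the paper's, which only shows that $(\exists_p.\mathbb{A},a_I)$ accepts the projection $(\overline{\mathbb{S}}_p,\overline{s}_I)$ and tacitly relies on bisimulation invariance of acceptance to pass from $\overline{\mathbb{S}}_p$ to $\mathbb{S}$.
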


\begin{proof}
Given $(\mathbb{A}, a)$ over color set $\mathcal{P}(\mathsf{P})$, we define the initialized $\mathsf{T}$-automaton $(\exists_{p}.\mathbb{A},a)$ over color set $\mathcal{P}(\mathsf{P}\setminus\{p\})$ as the following automaton:
$$(\exists_{p}.\mathbb{A},a):=(A, \Delta_{p}, \Omega, a),$$
where $\Delta_{p}: A\times\mathcal{P}(\mathsf{P}\setminus\{p\})\rightarrow \mathcal{P}\mathsf{T}A$, $(a, c)\mapsto\Delta(a, c)\cup\Delta(a, c\cup\{p\})$.

 In order to show that (\ref{L}) holds, we start with the direction from right to left:\\
$(\Longleftarrow)$ We claim that if the initialized $\mathsf{T}$-automaton $(\mathbb{A}, a_I)$ accepts a $\mathcal{P}(\mathsf{P})$-colored $\mathsf{T}$-coalgebra $(\mathbb{S'},s'_I)$, then $(\exists_{p}.\mathbb{A},a)$ accepts $(\mathbb{S'}_p, s'_I)$; the projection of $(\mathbb{S'},s'_I)$ to the set $\mathsf{P}\setminus\{p\}$. Proof of the claim is straightforward, since all legitimate moves of $\exists$ in the game $\mathcal{G}(\mathbb{A},\mathbb{S'})$ are still legitimate moves of $\exists$ in the game $\mathcal{G}(\exists_{p}.\mathbb{A},\mathbb{S'}_p)$.\\
$(\Longrightarrow)$ Let us assume that $(\exists_{p}.\mathbb{A},a)$ accepts $\mathcal{P}(\mathsf{P}\setminus\{p\})$-colored $\mathsf{T}$-coalgebra $(\mathbb{S}, s_I)=(S,\sigma,\gamma, s_I)$, we will define a $\mathcal{P}(\mathsf{P})$-colored coalgebra $(\overline{S},\overline{s}_I)$ such that it satisfies (\ref{L}).\\
From Proposition \ref{factaut} it follows that $(\mathbb{A},a_I)$ is totally satisfiable and has a true state. In addition we get a $\mathcal{P}(\mathsf{P})$-colored witnessing coalgebra $\mathbb{Q}=(Q,\rho,\gamma_{Q})$. In the following we will give the construction of $(\overline{S},\overline{s}_I)$ using $(\mathbb{S},s_I)$ and $\mathbb{Q}$.

We put $\overline{S}:=(S\times A)\uplus Q$ and in order to define the coalgebra structure $\overline{\sigma}:\overline{S}\rightarrow\mathsf{T}\overline{S}$ we distinguish the following cases:

\begin{itemize}
\item[(1)]$q\in Q$, define $\overline{\sigma}(q):=\rho(q)$,
\item[(2)]$(s,a)\in S\times A$ and $(s,a)\notin\mbox{Win}_{\exists}(\mathbb{S},\exists_{p}.\mathbb{A})$, define $\overline{\sigma}(s,a):=\mathsf{T}\kappa_{a}(\sigma(s))$, where $\kappa_{a}:S\rightarrow S\times A$, $s\mapsto (s,a)$, 
\item[(3)]$(s,a)\in S\times A$ and $(s,a)\in\mbox{Win}_{\exists}(\mathbb{S},\exists_{p}.\mathbb{A})$. In this case from $\exists$'s winning strategy in $\mathcal{G}(\mathbb{S},\exists_{p}.\mathbb{A})@(s,a)$ we get a $\phi_{s,a}\in\Delta(a,\gamma(s))$ and a relation $Z_{s,a}:S\rel A$ such that $Z_{s,a}\subseteq\mbox{Win}_{\exists}(\mathbb{S},\exists_{p}.\mathbb{A})$ and $(\sigma(s),\phi_{s,a})\in LZ_{s,a}$. We extend relation $Z_{s,a}$ to the relation $Z'_{s,a}:S\rel A$ as follows:
$$Z'_{s,a}:= Z_{s,a}\cup \{(t,a_{\top})\mid t\notin Dom(Z_{s,a})\}.$$
Considering the projection maps $\pi_{1}:Z'_{s,a}\rightarrow S$ and $\pi_{2}:Z'_{s,a}\rightarrow A$, we get that $Z'_{s,a}=\pi_{1}^{\circ};\pi_{2}$
\begin{clm*}[\textbf{1}] $(\sigma(s),\phi_{s,a})\in L(\pi_{1}^{\circ};i);L(\pi_{2}\cup Y)$, where $Y$ is given by Proposition~\ref{factaut}(3), from totally satisfiability of $(\mathbb{A}, a_I)$, and $i:Z\hookrightarrow Z\uplus Q$ is inclusion map.
\end{clm*}
\emph{Proof of Claim (1)}.
\begin{itemize}
\item[(i)]$\sigma(s)\in Dom(L(\pi_{1}^{\circ};i))$ by Proposition~\ref{diag}~$(2)$ and fullness of $Z'_{s,a}$ on $S$.
\item[(ii)]$\phi_{s,a}\in Rng (L(\pi_{2}\cup Y))$, because $\phi_{s,a}\in Rng(LY)$ (by definition of $Y$) and $Rng (LY)\subseteq Rng (L(\pi_{2}\cup Y))$
\item[(iii)]$(\sigma(s),\phi_{s,a})\in LZ'_{s,a}= L(\pi_{1}^{\circ};\pi_{2})$, since $(\sigma(s),\phi_{s,a})\in LZ_{s,a}$ and $LZ_{s,a}\subseteq LZ'_{s,a}$.

Hence by quasi-functoriality of $L$ and $\pi_{1}^{\circ};\pi_2\subseteq (\pi_{1}^{\circ};i);(\pi_2\cup Y)$ we get that $$(\sigma(s),\phi_{s,a})\in L\pi_{1}^{\circ};L(\pi_{2}\cup Y).$$
\end{itemize}
We may take $\overline{\sigma}(s,a)\in\mathsf{T}(Z'_{s,a}\uplus Q)\subseteq\mathsf{T}(S\times A)\subseteq\mathsf{T}{\overline{S}}$ such that 
$$(\sigma(s),\overline{\sigma}(s,a))\in L\pi_{1}^{\circ}~\text{and}~(\overline{\sigma}(s,a),\phi_{s,a})\in L(\pi_{2}\cup Y).$$
\end{itemize}
In order to complete the definition of $\mathcal{P}(\mathsf{P})$-colored pointed coalgebra $(\overline{\mathbb{S}}, (s_I,a_I)$ we have to introduce a coloring $\overline{\gamma}:\overline{S}\rightarrow \mathcal{P}(\mathsf{P})$. We do so by distinguishing the following cases:
\begin{itemize}
\item[(1)]$q\in Q$ define $\overline{\gamma}(q):=\gamma_{Q}(q)$,
\item[(2)]$(s,a)\in S\times A$ and $(s,a)\notin\mbox{Win}_{\exists}(\mathbb{S},\exists_{p}.\mathbb{A})$, define $\overline{\gamma}(s,a):=\gamma(s)$,
\item[(3)]$(s,a)\in S\times A$ and $(s,a)\in\mbox{Win}_{\exists}(\mathbb{S},\exists_{p}.\mathbb{A})$. In this case we will define $\overline{\gamma}(s,a)$ by considering the choice of $\exists$ at $(s,a)$. Since $(s,a)$ is a winnig position for $\exists$, she picks an element $\phi_{s,a}\in\Delta_{p}(a, \gamma(s))$. But from the definition of $\Delta_p$ we know that $$\phi_{s,a}\in\Delta_{p}(a, \gamma(s))=\Delta(a, \gamma(s))\cup\Delta(a, \gamma(s)\cup\{p\}).$$
We define $\overline{\gamma}(s,a):=\gamma(s)\cup\{p\}$ if $\phi_{s,a}\in\Delta(a, \gamma(s)\cup\{p\})$, otherwise we put $\overline{\gamma}(s,a):=\gamma(s)$.
\end{itemize}
\begin{clm*}[\textbf{2}] ~$\mathbb{S},s_I\leftrightarroweq_{p}^{L}\mathbb{\overline{S}},(s_I,a_I)$.
\end{clm*}
\begin{flushleft}
\emph{Proof of claim (2)}.
\end{flushleft} We will show that the graph of partial map $\pi_S:\overline{S}\rightarrow S$ is an up-to-$p$ bisimulation between $\mathbb{S},s_I$ and $\mathbb{S},s_I$, so we need to prove the following:
$$(\overline{\sigma}(\overline{s}),\sigma(s))\in L\pi_{S} ~\text{and}~\overline{\gamma}(\overline{s})\setminus\{p\}=\gamma(s)~\text{whenever}~(\overline{s},s)\in\pi_S.$$

We have two cases:
\begin{itemize}
\item[(i)]$(s,a)\in S\times A$ and $(s,a)\notin\mbox{Win}_{\exists}(\mathbb{S},\exists_{p}.\mathbb{A})$.
In this case the statement holds since from the definition of $\overline{\sigma}$ we have that:
\begin{eqnarray*}
(\overline{\sigma}(s,a),\sigma(s)) &=& (\mathsf{T}\kappa_{a}(\sigma(s)),\sigma(s))\\
&\in & (\textsf{T}\kappa_{a})^{\circ}\\
&=& L(\kappa_{a}^{\circ})\\
&\subseteq & L\pi_{S}
\end{eqnarray*}

and from the definition of $\overline{\gamma}$ we have that in this case $\overline{\gamma}(\overline{s})=\gamma(s)$.

\item[(ii)]$(s,a)\in S\times A$ and $(s,a)\in \mbox{Win}_{\exists}(\mathbb{S},\exists_{p}.\mathbb{A})$. In this case by the definition of $\overline{\sigma}$ we get that: 
$$(\overline{\sigma}(s,a),\sigma(s))\in L\pi_{1}\subseteq\L\pi_{S},$$
and again from the definition of $\overline{\gamma}$ it is clear that $\overline{\gamma}(\overline{s})\setminus\{p\}=\gamma(s)$.
\end{itemize}

\begin{clm*}[\textbf{3}] ~$((s_I,a_I),a_I)\in\mbox{Win}_{\exists}(\overline{\mathbb{S}},\mathbb{A})$.
\end{clm*}
\begin{flushleft}
\begin{flushleft}
\emph{{Proof of claim (3)}}.
\end{flushleft} Let $(\Phi, Z)$ be a winning strategy for $\exists$ in $\mathcal{G}(\mathbb{S}, \exists_{p}.\mathbb{A})@(s_I,a_I)$ and $(\Psi, Y)$ be $\exists$'s strategy in $\mathcal{G}(\mathbb{Q}, \mathbb{A})$. Define $\exists$'s strategy in $\mathcal{G}(\overline{\mathbb{S}},\mathbb{A})$ as follows:
\end{flushleft}

\begin{eqnarray*}
\overline{\Phi} : \overline{S}\times A &\rightarrow& \textsf{T} A \\ ((s,b),a)  &\mapsto & \phi_{s,a}\\ (q,a) &\mapsto & \psi_{q,a}
\end{eqnarray*}

\begin{eqnarray*}
\overline{Z} : \overline{S}\times A &\rightarrow&\mathcal{P}(\overline{S}\times A) \\ ((s,b),a) &\mapsto & \pi_{2}\cup Y\\ (q,a) &\mapsto & Y
\end{eqnarray*}

where $Y$ is given by $(\Psi, Y)$ and $\pi_{2}:Z'_{s,a}\rightarrow A$ such that $Z'_{s,a}$ is the extension of $Z_{s,a}$ given by $(\Phi, Z)$ at position $(s,a)$.\\
\begin{clm*}[\textbf{3a}] For the following types of positions in $\mathcal{G}(\overline{\mathbb{S}}, \mathbb{A})$, the given strategy $(\overline{\Phi}, \overline{Z})$ provides legitimate moves for $\exists$:
\begin{itemize}
\item[(i)] $(q,a)\in \overline{S}\times A$ and $(q,a)\in \mbox{Win}_{\exists}(\mathbb{Q},\mathbb{A})$,
\item[(ii)] $((s,a),a)\in \overline{S}\times A$ and $(s,a)\in \mbox{Win}_{\exists}(\mathbb{S},\exists_{p}.\mathbb{A})$
\end{itemize}
\end{clm*}
\begin{flushleft}
\emph{Proof of Claim (3a)}. 
\end{flushleft}
\begin{itemize}
\item[(i)] It is clear since $\overline{\sigma}(q)=\rho(q)$ and at this position $\exists$ plays her winning strategy in $\mathcal{G}(\mathbb{Q},\mathbb{A})$.

\item[(ii)] We need to show that: $$(\overline{\sigma}(s,a),\overline{\phi}_{s,a})\in L\overline{Z}_{s,a},$$

but this is simply the case from the definition of $\overline{\sigma}$:
$$(\overline{\sigma}(s,a),\overline{\phi}_{s,a})=(\overline{\sigma}(s,a),\phi_{s,a})\in L(\pi_{2}\cup Y)$$
\end{itemize}

This finishes the proof of claim $3a$.\\
\begin{clm*}[\textbf{3b}] ~$(\overline{\Phi},\overline{Z})$ guarantees $\exists$ to win any match of $\mathcal{G}(\overline{\mathbb{S}}, \mathbb{A})$ starting from $((s_I,a_I),a_I)$.
\end{clm*}
\begin{flushleft}
\emph{Proof of Claim (3b)}.
\end{flushleft} To provide this, consider an arbitrary match which conforms the strategy $(\overline{\Phi},\overline{Z})$. From the definition of this strategy, it is clear that at each round of the match $\forall$ may have three different types of positions to choose from: elements of the form $(q,a)$, elements of the form $((s,a),a)$ and elements of the form $((t,a_{\top}),a_{\top})$. So to check whether $(\overline{\Phi},\overline{Z})$ is indeed a winning strategy for $\exists$ we may distinguish the following matches:

\begin{itemize}
\item[(i)] At some stage $\forall$ chooses an element $(q,a)\in Y$. From this moment on, there is no way to  go through the states of $\mathbb{S}$  and since $Y\subseteq\mbox{Win}_{\exists}(\mathbb{Q},\mathbb{A})$, $\exists$ plays her winning strategy in  $\mathcal{G}(\mathbb{Q}, \mathbb{A})@(q,a)$ and wins the match.
\item[(ii)] $\forall$ always picks an element of the form $((s,a),a)$. In this case the match will never go through the states of $\mathbb{Q}$ and for any $(\overline{\Phi},\overline{Z})$-conform match starting from $((s_I,a_I),a_I)$
$$((s_I,a_I),a_I)((s_1,a_1),a_1)((s_2,a_2),a_2)\dots$$
in $\mathcal{G}(\overline{\mathbb{S}}, \mathbb{A})$, the corresponding match
$$(s_I,a_I) (s_1,a_1) (s_2,a_2) \ldots$$
in $\mathcal{G}(\mathbb{S}, \mathbb{A})$ conforms $(\Phi,Z)$. And similar as in the previous argument, since we assumed $(\Phi,Z)$ to be a winning strategy for $\exists$, $(\overline{\Phi},\overline{Z})$ is also a winning strategy for her.
\item[(iii)] At some stage $\forall$ picks an element 
$((t,a_{\top}),a_{\top})$. Then from the assumption that $a_{\top}$ is a true state of the automaton $\mathbb{A}$, it follows that playing the strategy $(\overline{\Phi},\overline{Z})$ from $((t,a_{\top}),a_{\top})$ is a win for $\exists$.
\end{itemize}

This finishes the proof of claim $(3)$, and so the proof of Theorem\ref{thm1}.

\end{proof}

\section{Uniform Interpolation for $\mu\mathcal{L}^{\mathsf{T}}_L$}
In the following section we will prove the main theorem of this paper, uniform interpolation for $\mu \mathcal{L}_L^{\textsf{T}}(\textsf{P})$. In order to do this we  introduce the notion of a bisimulation quantifier and show that bisimulation quatifiers are definable in the language $\mu \mathcal{L}_L^{\textsf{T}}(\textsf{P})$. Our proof follows the proof in~\cite{ls} which shows a similar result for monotone modal logic.
\begin{dfn}
Define the relation of logical consequence $$\vDash : \mu \mathcal{L}_L^{\textsf{T}}(\textsf{P})\rel \mu \mathcal{L}_L^{\textsf{T}}(\textsf{P})$$ by $a \vDash a'$ iff $s\Vdash_{\mathbb{S}}a$ implies $s\Vdash_{\mathbb{S}}a'$ for all states $s$ in any $\textsf{T}$-model $\mathbb{S}$.
\end{dfn}
\begin{dfn}
Given a propositional letter $p$, the bisimulation quantifier $\exists{p}$ is an operator with the following semantics for any formula $b\in\mu\mathcal{L}_{L}^{\mathsf{T}}(\mathsf{P})$:
\begin{eqnarray}\label{semantic}
\mathbb{S},s\Vdash\exists{p}.b\quad\textmd{iff}\quad\mathbb{S'},s'\Vdash b,~\textmd{for some}~\mathbb{S'},s'~\textmd{with}~\mathbb{S},s\leftrightarroweq_{p}^{L}\mathbb{S'},s'.
\end{eqnarray}
\end{dfn}

The following proposition shows that the bisimulation quantifier is definable in the language $\mu\mathcal{L}_{L}^{\mathsf{T}}(\mathsf{P})$

\begin{prop}
Given a proposition letter $p$, there is a map\\ $\exists{p}:\mu\mathcal{L}_{L}^{\mathsf{T}}(\mathsf{P})\longrightarrow\mu\mathcal{L}_{L}^{\mathsf{T}}(\mathsf{P})$ such that for any formula $b\in\mu\mathcal{L}_{L}^{\mathsf{T}}(\mathsf{P})$, the semantics of $\exists{p}.b$ is given by ~(\ref{semantic}).
\end{prop}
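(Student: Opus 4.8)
The plan is to define $\exists p.b$ by routing the formula through coalgebra automata, applying the closure-under-projection theorem, and routing back to the logic. Concretely, given $b \in \mu\mathcal{L}_L^{\mathsf{T}}(\mathsf{P})$, first apply Proposition~\ref{Prop4} to obtain an equivalent initialized $\mathsf{T}$-automaton $(\mathbb{A}_b, a_b)$ over the color set $C = \mathcal{P}(\mathsf{P})$, so that for every $C$-colored pointed coalgebra $(\mathbb{S},s)$ we have $(\mathbb{S},s)\Vdash b$ iff $(\mathbb{A}_b,a_b)$ accepts $(\mathbb{S},s)$. Then apply Theorem~\ref{thm1} (Closure under projection) with the distinguished letter $p$ to obtain an initialized $\mathsf{T}$-automaton $(\exists_p.\mathbb{A}_b, a)$ over the color set $\mathcal{P}(\mathsf{P}\setminus\{p\})$ satisfying the equivalence~(\ref{L}). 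Finally apply Proposition~\ref{prop5} to turn $(\exists_p.\mathbb{A}_b, a)$ back into a $\mu\mathcal{L}_L^{\mathsf{T}}(\mathsf{P}\setminus\{p\})$-formula, which we \emph{define} to be $\exists p.b$; note this formula lies in $\mu\mathcal{L}_L^{\mathsf{T}}(\mathsf{P}\setminus\{p\}) \subseteq \mu\mathcal{L}_L^{\mathsf{T}}(\mathsf{P})$ by Remark~\ref{subset}, and in particular $p$ does not occur in it.

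The verification then amounts to chasing the definitions and translating between the three ``views'' of a model: a $\mathsf{T}$-model $\mathbb{S} = (S,\sigma,V)$, its avatar as a $\mathcal{P}(\mathsf{P})$-colored coalgebra $\hat{\mathbb{S}} = (S,\sigma,\gamma_V)$, and its projection $\hat{\mathbb{S}}^{\mathsf{P}\setminus\{p\}}$. For the left-to-right direction: if $\mathbb{S},s\Vdash \exists p.b$, then by Proposition~\ref{prop5} the automaton $(\exists_p.\mathbb{A}_b,a)$ accepts the projected colored coalgebra $\hat{\mathbb{S}}^{\mathsf{P}\setminus\{p\}}$ at $s$; by~(\ref{L}) there is a $\mathcal{P}(\mathsf{P})$-colored $(\overline{\mathbb{S}},\overline{s})$ with $\hat{\mathbb{S}}^{\mathsf{P}\setminus\{p\}}, s \leftrightarroweq_p^L \overline{\mathbb{S}},\overline{s}$ and $(\mathbb{A}_b,a_b)$ accepts $(\overline{\mathbb{S}},\overline{s})$; viewing $\overline{\mathbb{S}}$ as a $\mathsf{T}$-model $\mathbb{S}'$ and applying Proposition~\ref{Prop4}, we get $\mathbb{S}',\overline{s}\Vdash b$, while the up-to-$p$ bisimilarity witnesses $\mathbb{S},s \leftrightarroweq_p^L \mathbb{S}',\overline{s}$ (using that up-to-$p$ bisimilarity of $\hat{\mathbb{S}}$ and $\hat{\mathbb{S}}^{\mathsf{P}\setminus\{p\}}$ is automatic, since they differ only in the interpretation of $p$). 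Hence $\mathbb{S},s\Vdash b$ holds in an up-to-$p$ bisimilar model, which is exactly the semantics~(\ref{semantic}). The right-to-left direction is the same argument run backwards, using the ``only if'' halves of the three propositions and the $(\Longleftarrow)$ direction of~(\ref{L}).

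I expect the only genuine subtlety to be bookkeeping around colors versus valuations: Proposition~\ref{Prop4} and Proposition~\ref{prop5} are stated for $C = \mathcal{P}(\mathsf{P})$, so I must be careful that the output automaton of Theorem~\ref{thm1} lives over $\mathcal{P}(\mathsf{P}\setminus\{p\})$ and feed Proposition~\ref{prop5} with the correct propositional-letter set $\mathsf{P}\setminus\{p\}$, and conversely that when I re-inflate a $\mathcal{P}(\mathsf{P}\setminus\{p\})$-colored coalgebra to a $\mathsf{T}$-model I may freely assign an arbitrary interpretation to $p$ without affecting anything, because $\exists p.b$ does not mention $p$ and $b$'s automaton reads the full $\mathcal{P}(\mathsf{P})$-color. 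All three ingredient propositions and the projection theorem are already available, so this is genuinely a short glue argument rather than a new construction; the main obstacle is simply stating the translation between the model-as-$\mathsf{T}$-model and the model-as-colored-coalgebra cleanly enough that the two directions are manifestly symmetric.
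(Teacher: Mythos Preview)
Your proposal is correct and follows essentially the same route as the paper: transform $b$ into an automaton via Proposition~\ref{Prop4}, apply the projection construction of Theorem~\ref{thm1}, and translate back via Proposition~\ref{prop5}, then verify~(\ref{semantic}) by chaining the three equivalences. You are in fact more explicit than the paper about the color-set bookkeeping (the passage between $\mathcal{P}(\mathsf{P})$- and $\mathcal{P}(\mathsf{P}\setminus\{p\})$-colored coalgebras and the corresponding $\mathsf{T}$-models), which the paper's proof leaves implicit in its chain of iffs.
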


\begin{proof}
For the proof of this proposition we will use results from section $3$ and section $4$ and define the map  $\exists{p}:\mu\mathcal{L}_{L}^{\mathsf{T}}(\mathsf{P})\longrightarrow\mu\mathcal{L}_{L}^{\mathsf{T}}(\mathsf{P})$ as follows:

Take a formula $b\in\mu\mathcal{L}_{L}^{\mathsf{T}}(\mathsf{P}) $, by Proposition~\ref{Prop4} we can transform it to an equivalent initialized $\mathsf{T}$-automaton $(\mathbb{A}_{b}, a_b)$. From the Theorem~\ref{thm1} we have an initialized $\mathsf{T}$-automaton $(\exists_{p}.\mathbb{A}_{b}, a_b)$ such that:
\begin{eqnarray*}
(\exists_{p}.\mathbb{A}_{b}, a_b)~\text{accepts}~(\mathbb{S},s)~\text{iff}~(\mathbb{A}_b, a_b)~\text{accepts}~(\mathbb{S'},s')~\text{for some}~(\mathbb{S'},s')~\text{with}~\mathbb{S},s\leftrightarroweq_{p}^{L}\mathbb{S'},s'.
\end{eqnarray*}

Now by the Proposition~\ref{prop5} we can transform the initialized $\mathsf{T}$-automaton $(\exists_{p}.\mathbb{A}_{b}, a_b)$ to an equivalent formula $a_{(\exists_{p}.\mathbb{A}_{b})}$ and put $\exists{p}. b:=a_{(\exists_{p}.\mathbb{A}_{b})}$. 
It is easy to show that: 

$$\mathbb{S},s\Vdash a_{(\exists_{p}.\mathbb{A}_{b})}\quad\textmd{iff}\quad\mathbb{S'},s'\Vdash b,~\textmd{for some}~\mathbb{S'},s'~\textmd{with}~\mathbb{S},s\leftrightarroweq_{p}^{L}\mathbb{S'},s',
$$

since we have:

\begin{align*}
\mathbb{S},s\Vdash a_{(\exists_{p}.\mathbb{A}_{b})}
&~\text{iff}~(\mathbb{S},s)\in L(\exists_{p}.\mathbb{A}_{b}, a_b)\\
&~\text{iff}~(\mathbb{S'},s')\in L(\mathbb{A}_b, a_b) ~\text{for some}~(\mathbb{S'},s')~\text{with}~\mathbb{S},s\leftrightarroweq_{p}^{L}\mathbb{S'},s' \\
&~\text{iff}~\mathbb{S'},s'\Vdash b,~\textmd{for some}~\mathbb{S'},s'~\textmd{with}~\mathbb{S},s\leftrightarroweq_{p}^{L}\mathbb{S'},s'.
\end{align*}

\end{proof}

\begin{rem} The function $\exists{p}$ removes all occurances of the propositional letter $p$ from its argument. This means that it restricts to a mapping\newline $\exists{p}:\mu\mathcal{L}_{L}^{\mathsf{T}}(\mathsf{P})\longrightarrow\mu\mathcal{L}_{L}^{\mathsf{T}}(\mathsf{P}\setminus\{p\})$.
\end{rem}

 Now we are ready to prove the uniform interpolation theorem:

\begin{thm}[{\textbf{Uniform Interpolation for}}$ ~\mu\mathcal{L}_{L}^{\mathsf{T}}$]\label{uniform}
For any formula $a\in\mu \mathcal{L}_L^{\textsf{T}}(\mathsf{P})$ and any set $\mathsf{Q}\subseteq \mathsf{P}_{a}$ of propositional letters, there is a formula $a_\mathsf{Q}\in\mu \mathcal{L}_L^{\textsf{T}}(\mathsf{Q})$, effectively constructable from $a$, such that for every formula $b\in\mu \mathcal{L}_L^{\textsf{T}}(\mathsf{P})$ with $\mathsf{P}_a\cap \mathsf{P}_b\subseteq \mathsf{Q}$, we have that $$a\vDash b\quad\text {iff}\quad a_{\mathsf{Q}}\vDash b.$$
\end{thm}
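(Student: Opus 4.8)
The plan is to derive uniform interpolation from the definability of the bisimulation quantifier, following the standard pattern (as in D'Agostino--Hollenberg for the $\mu$-calculus). Given $a \in \mu\mathcal{L}_L^{\mathsf{T}}(\mathsf{P})$ and $\mathsf{Q} \subseteq \mathsf{P}_a$, let $\{p_1,\dots,p_n\} = \mathsf{P}_a \setminus \mathsf{Q}$ be the propositional letters of $a$ that we want to eliminate, and set $a_{\mathsf{Q}} := \exists p_1. \exists p_2. \cdots \exists p_n. a$, where $\exists p_i$ is the definable bisimulation quantifier from the preceding proposition. By the Remark each application of $\exists p_i$ removes all occurrences of $p_i$, so iterating gives $a_{\mathsf{Q}} \in \mu\mathcal{L}_L^{\mathsf{T}}(\mathsf{P}_a \setminus \{p_1,\dots,p_n\}) = \mu\mathcal{L}_L^{\mathsf{T}}(\mathsf{Q})$, and the construction is effective since each $\exists p_i$ is (automaton to automaton via Theorem~\ref{thm1}, then back to a formula). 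It remains to prove the equivalence $a \vDash b \iff a_{\mathsf{Q}} \vDash b$ for every $b$ with $\mathsf{P}_a \cap \mathsf{P}_b \subseteq \mathsf{Q}$.

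For the direction $a_{\mathsf{Q}} \vDash b \Rightarrow a \vDash b$: I would first establish the easy semantic fact that $a \vDash \exists p. a$ for any $p$ (take $\mathbb{S}' = \mathbb{S}$, $s' = s$, since $\leftrightarroweq_p^L$ is reflexive), hence by induction $a \vDash a_{\mathsf{Q}}$; composing with $a_{\mathsf{Q}} \vDash b$ gives $a \vDash b$. Note this direction needs no hypothesis on $\mathsf{P}_b$.

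For the direction $a \vDash b \Rightarrow a_{\mathsf{Q}} \vDash b$: suppose $\mathbb{S}, s \Vdash a_{\mathsf{Q}} = \exists p_1 \cdots \exists p_n. a$. Unwinding the semantics~(\ref{semantic}) $n$ times, there is a model $\mathbb{S}'$ and state $s'$ with $\mathbb{S}', s' \Vdash a$ and with $\mathbb{S}, s$ related to $\mathbb{S}', s'$ by a composite of up-to-$p_i$ $L_{\mathsf{P}}$-bisimulations; since each up-to-$p_i$ bisimulation is an $L_{\mathsf{P} \setminus \{p_i\}}$-bisimulation, and $L$-bisimulations compose (by the proposition on closure of $L$-bisimulations under composition), the composite is an $L_{\mathsf{Q}'}$-bisimulation for $\mathsf{Q}' = \mathsf{P} \setminus \{p_1,\dots,p_n\} \supseteq \mathsf{Q}$. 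From $a \vDash b$ we get $\mathbb{S}', s' \Vdash b$. Now I want to transfer $\Vdash b$ back along this bisimulation to conclude $\mathbb{S}, s \Vdash b$: here is where $\mathsf{P}_a \cap \mathsf{P}_b \subseteq \mathsf{Q}$ enters. The subtlety is that the bisimulation relating $\mathbb{S}$ and $\mathbb{S}'$ respects only the letters in $\mathsf{Q}'$, whereas $b$ may mention letters in $\mathsf{P}_b \setminus \mathsf{Q}$; however those letters do not occur in $a$, so I may freely re-interpret them on $\mathbb{S}'$ to agree with their interpretation on $\mathbb{S}$ (more precisely, modify $\mathbb{S}'$ on $\mathsf{P}_b \setminus \mathsf{Q} \subseteq \mathsf{P} \setminus \mathsf{P}_a$, which does not affect $\mathbb{S}', s' \Vdash a$ by the fact that truth of a formula depends only on the interpretation of its own letters, and arrange that the resulting model is still $L_{\mathsf{Q} \cup (\mathsf{P}_b\setminus\mathsf{Q})}$-bisimilar to $\mathbb{S}$ over $\mathsf{P}_b$). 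Then $\mathbb{S}, s$ and the modified $\mathbb{S}', s'$ are $L_{\mathsf{P}_b}$-bisimilar, and by bisimulation invariance of truth (the proposition stating $L_{\mathsf{P}}$-bisimilar states are equivalent, applied relative to $\mathsf{P}_b$) we get $\mathbb{S}, s \Vdash b$.

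The main obstacle is this last transfer step: one must be careful that re-interpreting the ``private'' letters of $b$ on $\mathbb{S}'$ can be done so as to simultaneously (i) preserve $\mathbb{S}', s' \Vdash a$ — immediate since those letters are not in $\mathsf{P}_a$ — and (ii) upgrade the $L_{\mathsf{Q}'}$-bisimulation witnessing $\mathbb{S}, s \leftrightarroweq_{p_1,\dots,p_n}^L \mathbb{S}', s'$ into an $L_{\mathsf{P}_b}$-bisimulation, by copying the $\mathsf{P}_b\setminus\mathsf{Q}$-values from $\mathbb{S}$-states to the $\mathbb{S}'$-states related to them. This requires the bisimulation relation to be, say, the graph of a partial surjection or at least to assign each $\mathbb{S}'$-state a consistent set of $\mathbb{S}$-partners agreeing on $\mathsf{P}_b\setminus\mathsf{Q}$; when it is a genuine relation one instead works with a bisimilar variant of $\mathbb{S}'$ (e.g.\ a suitable subcoalgebra of the relation itself, viewed as a coalgebra) on which the valuation of the private letters is pulled back from $\mathbb{S}$. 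Once that bookkeeping is in place, everything else is a routine application of the results already proved — definability of $\exists p$, composition and restriction of $L_{\mathsf{Q}}$-bisimulations, and bisimulation invariance of $\mu\mathcal{L}_L^{\mathsf{T}}$.
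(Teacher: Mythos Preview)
Your overall strategy---defining $a_{\mathsf{Q}}$ as an iterated bisimulation quantification and arguing each direction via reflexivity of $\leftrightarroweq^L_p$ and bisimulation invariance---is exactly the paper's. The easy direction ($a_{\mathsf{Q}}\vDash b \Rightarrow a\vDash b$ via $a\vDash a_{\mathsf{Q}}$) matches.

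Your treatment of the forward direction, however, invents an obstacle that is not there. You correctly observe that the composite of the up-to-$p_i$ bisimulations is an $L_{\mathsf{Q}'}$-bisimulation for $\mathsf{Q}' = \mathsf{P}\setminus\{p_1,\dots,p_n\}$, and then worry that $b$ may use letters in $\mathsf{P}_b\setminus\mathsf{Q}$ not respected by this bisimulation. But every such letter \emph{is} in $\mathsf{Q}'$: if $q\in\mathsf{P}_b\setminus\mathsf{Q}$ were among the $p_i\in\mathsf{P}_a\setminus\mathsf{Q}$, then $q\in\mathsf{P}_a\cap\mathsf{P}_b\subseteq\mathsf{Q}$, a contradiction. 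Hence $\mathsf{P}_b\subseteq\mathsf{Q}'$, the composite is already an $L_{\mathsf{P}_b}$-bisimulation, and bisimulation invariance transfers $b$ from $s'$ back to $s$ directly. This is precisely how the paper argues (one line: ``Since $\mathsf{P}_b\subseteq\mathsf{P}\setminus\{p_0,\dots,p_{n-1}\}$ we get $s_0\Vdash b$'').

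Your proposed workaround---re-interpreting the ``private'' letters of $b$ on $\mathbb{S}'$ and massaging the bisimulation into a functional one or passing to a span coalgebra---is therefore unnecessary, and as you yourself note it is not obviously straightforward to carry out. Drop that paragraph; the hypothesis $\mathsf{P}_a\cap\mathsf{P}_b\subseteq\mathsf{Q}$ is exactly what guarantees the quantified-away letters are absent from $b$, so no repair of the bisimulation is needed.
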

\begin{proof}
Let $\{p_{0}, p_{1},...,p_{n-1}\}=\mathsf{P}_a\setminus \mathsf{Q}$. Then set $$a_{\mathsf{Q}}:=\exists p_0.\exists p_1. \ldots\exists p_{n-1}. a.$$
In order to check that $a\vDash b$ iff $a_{\mathsf{Q}}\vDash b$, first assume that $a\vDash b$. To prove that $a_{\mathsf{Q}}\vDash b$ take a pointed $\textsf{T}$-model $(\mathbb{S}_{0}, s_0)$ with $s_0\Vdash_{\mathbb{S}_0} a_{\mathsf{Q}}$. By the semantics of the bisimulation quantifiers we get states $s_i$ in $\textsf{T}$-models $\mathbb{S}_i$ for $i=1,2,\ldots,n$ such that $s_i\leftrightarroweq_{p_i} s_{i+1}$ for $i=0,...,n$ and $s_n\Vdash_{\mathbb{S}_n} a$. From the later fact it follows that $s_n\Vdash_{\mathbb{S}_n} b$ since we have assumed $a\vDash b$.
Because each of the witnessing up-to-$p_i$ $L_{\textsf{P}}$-bisimulations for $i=0,1,\ldots,n-1$ is also an $L_{\textsf{P}\setminus\{p_0,p_1,\ldots, p_{n-1}\}}$-bisimulation, we can compose them and obtain an $L_{\textsf{P}\setminus\{p_0,p_1,\ldots, p_{n-1}\}}$-bisimulation between $s_0$ and $s_n$. Since $\mathsf{P}_b\subseteq \textsf{P}\setminus\{p_0,p_1,\ldots, p_{n-1}\}$ we get $s_0\Vdash_{\mathbb{S}_0}b$.

For the other direction we show that $a\vDash a_{\mathsf{Q}}$. Then $a\vDash b$ follows by transitivity from $a_{\mathsf{Q}}\vDash b$. Take any state $s$ in $\textsf{T}$-model $\mathbb{S}=(S,\sigma,V)$ with $s\Vdash_{\mathbb{S}}a$. Then $s\Vdash_{\mathbb{S}}a_{\mathsf{Q}}$ because $s$ is up-to-$p$ $L_{\textsf{P}}$-bisimular to itself for any $p\in\textsf{P}$, since $\Delta_S$ is an $L_{\textsf{P}}$-bisimulation.
\end{proof}

\end{document}